\newtheorem{theorem}{Theorem}
\newtheorem{proposition}[theorem]{Proposition}
\newtheorem{corollary}[theorem]{Corollary}
\newtheorem{lemma}[theorem]{Lemma}
\newcommand{\Ope}[1]{\mathcal{#1}}
\newcommand{\Bod}[1]{\mathbb{#1}}
\newcommand{\Spa}[1]{\mathbf{#1}}
\begin{document}

\title{Variational Necessary and Sufficient Stability Conditions for Inviscid Shear Flow}

\author{M. Hirota$^1$, P. J. Morrison$^2$, Y. Hattori$^1$}

\address{$^1$ Tohoku University, Sendai, Miyagi 980-8677, Japan\\
$^2$ University of Texas at Austin, Austin, Texas 78712 USA}

\begin{abstract}
 A necessary and sufficient condition for linear stability of inviscid
 parallel shear flow is formulated by developing a novel variational
 principle, where the velocity profile is assumed to be monotonic and
 analytic. It is shown that unstable eigenvalues of Rayleigh's equation
 (which is a non-selfadjoint eigenvalue problem) can be associated with positive
 eigenvalues of a certain selfadjoint operator. The stability is therefore determined by maximizing a
quadratic form, which is theoretically and numerically more tractable than directly solving Rayleigh's equation. This variational
 stability criterion is based on the  understandings of Kre\u{i}n 
 signature for continuous spectrum and is  applicable to
 other stability problems of infinite-dimensional Hamiltonian systems.
\end{abstract}

\maketitle

 \section{Introduction}
\label{introduction}

In this paper ideas from the theory of Hamiltonian systems are used to obtain both necessary and sufficient stability conditions by a variational procedure.  The proposed procedure is of general utility, but the treatment here will  be confined  to plane parallel inviscid shear flow (see, e.g., \cite{Howard}).  In this introduction we give a sketch the underlying basis for the procedure in terms of a finite-dimensional Hamiltonian framework, and then place the present contribution in the context of the many previous results for shear flow.

For some Hamiltonian systems, the sign of the curvature of the potential
 energy function provides a necessary and sufficient condition for
 stability.  This is referred to as Lagrange's theorem, which is  the
 crux of many fluid and plasma stability results including  the `$\delta
 W$' energy principle of ideal magnetohydrodynamics (MHD)~\cite{Bernstein}.  For a more general class of Hamiltonian
 systems, definiteness of the Hamiltonian Hessian matrix evaluated at the equilibrium point of interest provides only a sufficient condition for stability.  This is sometimes referred to as Dirichlet's theorem, which is the crux of  many sufficient conditions for stability in the fluid and plasma literature.   The essential reason that Dirichlet's theorem does not provide  a necessary condition for stability is the possible existence of negative energy modes.  Negative energy modes are modes of undamped oscillation,  for which the Hamiltonian is negative when the mode is excited.  For stable (nondegenerate) Hamiltonian systems   of $n$ degrees-of-freedom  the  linear dynamics can be brought by a canonical transformation into the following normal form: 
\begin{equation}
H= \sum_{\alpha=1}^n\sigma_{\alpha}\omega_{\alpha}(p_{\alpha}^2 + q_{\alpha}^2)/2\,,
\label{normal}
\end{equation}
 where $(q_1,q_2, ... q_n;p_1,p_2...,p_n)$ are the  canonically conjugate coordinates,  
$\omega_{\alpha}$  are positive real numbers representing the mode frequencies, and $\sigma_{\alpha}\in\{\pm 1\}$ are the signatures of the mode,  often  called  Kre\u{i}n signature, with $+1$ and $-1$ corresponding to positive and negative energy modes, respectively.   Evidently, systems with both positive and negative energy modes  are  linearly stable, but do  not have a definite Hessian matrix.  (See, e.g.,  \cite{Morrison} for review.)

An advantage afforded by Lagrange's criterion over Dirichlet's is the powerful Rayleigh-Ritz variational method~\cite{Rayleigh0}, which  underlies the MHD and other energy principles.  With this method one need only produce a trial function that makes the Rayleigh quotient negative in order to show instability and, also in this way,  threshold parameter values for the transition to instability can be determined.  When this method is applicable,  it is of great utility because linear  stability conditions for interesting fluid and plasma systems are generally difficult to derive theoretically.  However, it only  applies to a restricted class of Hamiltonian systems for which the eigenvalue problem is self-adjoint, i.e,  systems with steady state bifurcations  to instability through zero frequency that have  pure exponential growth upon transition, and it is known that systems with shear flow are not self-adjoint and have instead  Kre\u{i}n bifurcations \cite{Krein} to overstability, i.e.,   unstable eigenvalues with both nonzero real and imaginary parts.   Such bifurcations are often called Hamiltonian-Hopf 
bifurcations\cite{Krein,Moser,MacKay,Morrison5} and can be viewed as  a resonance between positive and negative energy modes leading to  instability.

Thus we are led to re-examine Dirichlet's principle and seek an alteration that affords the utility of the Rayleigh-Ritz variational method for investigation of  Kre\u{i}n bifurcations.   For finite-dimensional Hamiltonian system written in the normal form of \eqref{normal}, it is evident that $I_{\alpha}=p_{\alpha}^2 + q_{\alpha}^2$ is a constant of motion for each $\alpha$.  From these constants of motion one can construct a  constant of motion with definite Hessian simply by flipping the signs of the signature in the normal form Hamiltonian.  Unfortunately, a priori knowledge of the existence of such a  definite constant of motion is generally not at hand, and one must actually solve the eigenvalue problem in order to be informed of its existence.  This is the essential reason Dirichlet's theorem does not give both necessary and sufficient conditions for stability and a Rayleigh-like variational method is not at hand. 

However, there are two related discoveries that we can exploit to circumvent this deficiency for problems with continuous spectra,  such as  those related to the Vlasov equation, MHD,  and the plane shear flow problem considered here.    The first is the infinite sequence of constants of motion discovered in Ref.~\cite{Oberman}, which were elaborated on in \cite{Case3} and used in the present context in  \cite{Barston}.   The second is the discovery  of a Kre\u{i}n-like signature for the continuous spectra of Vlasov-Poisson equilibria in \cite{Morrison2,Morrison3}, which was applied to plane shear flow in \cite{Balmforth2} and extended to a large class of systems in  \cite{Morrison6}  and  \cite{Hirota,Hirota3}.  These constants of motion in conjunction with the definition of signature allow the construction of a quadratic form, which we will call $Q$, the variation of which can be used in a manner akin Rayleigh's principle for ascertaining stability.  A version of the quadratic form $Q$ was previously given in \cite{Barston}, but it was not used  to obtain sufficient conditions for instability.  We note in passing that the discovery of signature for  the continuous spectrum has also led to rigorous  Kre\u{i}n-like theorems  \cite{Hagstrom1,Hagstrom2,Hagstrom3}, where discrete eigenvalues emerge from  the continuous spectrum, termed Continuum Hamiltonian Hopf bifurcations. (See also \cite{Grillakis,Kapitula} on infinite-dimensional Hamiltonian systems and many other contributions in the recent book \cite{oleg}.)

There have been many significant contributions to the classical plane parallel shear flow problem; thus,  it is important to put our contribution into perspective, which we do here.   The most famous condition is Rayleigh's criterion~\cite{Rayleigh}  that stipulates the existence of an inflection point in the velocity profile is  necessary for  instability, a criterion that was improved by Fj\o rtoft~\cite{Fjortoft}.  These criteria were obtained by direct manipulation of Rayleigh's equation (see  equation \eqref{eq_phi} below), which  governs linear disturbance about a base  shear flow.  The first allusion to Hamiltonian structure for this system appears in the works of Arnold~\cite{Arnold1,Arnold2,Arnold3} who obtained more general sufficient conditions for stability by making use of additional invariants.  This idea was anticipated in the plasma physics literature~\cite{Kruskal}, where the additional invariants were referred to as generalized entropies; today the  generalized entropies are referred to as Casimir invariants  and the general procedure is called the energy-Casimir method (see, e.g., \cite{Hazeltine,Holm,Morrison7,Morrison}).  In \cite{Balmforth2} it was shown explicitly  that all of the above criteria amount to a version of Dirichlet's theorem for this infinite-dimensional Hamiltonian system, where  it was also shown how to explicitly map the system into the infinite-dimensional  version of the normal form of \eqref{normal}.  In this way a signature for the continuous spectrum was first defined for this system, by paralleling the analogous procedure for the Vlasov-Poisson system~\cite{Morrison2,Morrison3}.  Arnold also introduced an important kind of constrained variation  he termed isovortical perturbations (see, e.g., \cite{Arnold}), which are a special case of the dynamically accessible variations of \cite{MP1,MP2} that are generated by Poisson brackets \cite{Morrison8,Morrison}.  Isovortical perturbations together with  a more general Dirichlet-like sufficient condition  for shear flow due to Barston \cite{Barston} play important roles  for obtaining  the results of the present  paper.   

We also note that prior to our variational approach, necessary and sufficient stability conditions were  obtained for certain classes of shear flows using  two other non-Hamiltonian approaches.   One is the perturbation expansion around a  neutrally stable eigenmode, which was pioneered by Tollmien~\cite{Tollmien} and developed by many authors~\cite{Lin,Lin2,ZLin,ZLin2}, and the other is  analysis  based on the Nyquist method~\cite{Rosenbluth,Balmforth} (applied  to magnetohydrodynamics in \cite{Chen}).  These two approaches, however, require detailed probing of Rayleigh's equation to obtain information under specific conditions.   Our variational approach is not only consistent with these earlier results,  but also advantageous  in that we do not have to solve Rayleigh's equation in a rigorous manner.  Namely, we can prove the instability by simply finding a test function (in the appropriate function space) that makes our quadratic form $Q$ positive. This is useful because explicit solutions for Rayleigh's equation are generally not available for a given velocity profile.  Moreover, in numerical calculation, one can obtain  stability boundaries more efficiently from this variational problem (i.e., the maximization of $Q$),  compared to  solving Rayleigh's equation.   We emphasize that our  approach is not limited to  shear flow with rather simple velocity profiles, but the same idea is applicable to various fluid and plasma  stability problems to which it can be of practical use.

Our  paper is organized as follows. In Sec.~\ref{sec:Rayleigh}, Rayleigh's equation is first introduced, and in  Sec.~\ref{sec:theorem} the notion of isovortical variation is reviewed.  Here we describe the quadratic form $Q$,  which  provides   the necessary and sufficient conditions  if the  velocity profile satisfies the assumptions of analyticity and monotonicity.
In particular,  we present the main theorem of this work (Theorem~\ref{th:multiple}), in which  the quadratic form $Q$ is given explicitly.  Then, in Sec.~\ref{sec:proof}, the proof of the main theorem is given.  Here,  we first focus on restricting perturbations to the appropriate function space, and then perform the spectral decomposition in a rigorous manner, which
largely reproduces the well-known spectral properties of Rayleigh's equation (e.g., \cite{satinger}).  
Next, we calculate the signature of $Q$ by applying techniques~\cite{Balmforth2, Hirota,Hirota3} for the action-angle representation of continuous spectrum, where the positive signature of $Q$ indeed predicts the existence of unstable eigenvalues.   Finally,  in Sec.~\ref{sec:proof}, we  show that the function space (the search space on which $Q$
should be maximized) can be extended to a larger one, which is  actually beneficial for  solving the variational problem more effectively.   Section \ref{sec:comparison} contains a demonstration that our variational criterion reproduces  the earlier results of the Nyquist  method~\cite{Rosenbluth,Balmforth} and the perturbation analysis of the  neutral modes~\cite{Tollmien,Lin,Lin2,ZLin,ZLin2}, while Sec.~\ref{sec:numerical}, contains several numerical examples that  
demonstrate  of our theorem.  We summarize in Sec.~\ref{summary}.

\section{Rayleigh equation}
\label{sec:Rayleigh}

We consider the linear stability of inviscid parallel shear flow
$\vec{U}=(0,U(x))$ on a domain $(x,y)\in[-L,L]\times[-\infty,\infty]$
 bounded by two walls at $x=\pm L$, where the flow is assumed to be 
incompressible and two-dimensional.
By introducing the $z$-component of the vorticity disturbance as
 $w(x,t)e^{iky}+{\rm c.c.}$ for a wavenumber $k>0$,
the
linearized vorticity equation is written as
\begin{align}
 i\partial_tw=&kUw+kU''\Ope{G}w\nonumber\\
 =:&k\Ope{L}w,\label{eq_w}
\end{align}
where
the prime ($'$) indicates the $x$ derivative, and
the convolution operator
$\Ope{G}$ is defined by
\begin{align}
 (\Ope{G}w)(x,t)=\int_{-L}^{L}g(x,s)w(s,t)ds,
\end{align}
with a kernel,
\begin{align}
 g(x,s)=\begin{cases}
	 -\frac{\sinh k(s-L)\sinh k(x+L)}{k\sinh 2kL} & x<s\\
	 -\frac{\sinh k(s+L)\sinh k(x-L)}{k\sinh 2kL} & s<x\\
	\end{cases}.
\end{align}
The stream function $\phi(x,t)$ is therefore given by $\phi=\Ope{G}w$ or
$w=\Ope{G}^{-1}\phi=-\phi''+k^2\phi$. 
By assuming an exponential behavior $\phi(x,t)=\hat{\phi}(x)e^{-i\omega
 t}$ with a complex frequency $\omega\in\Bod{C}$,
 the
eigenvalue problem for \eqref{eq_w}
is known as Rayleigh's equation~\cite{Rayleigh};
\begin{align}
 (c-U)(\hat{\phi}''-k^2\hat{\phi})+U''\hat{\phi}=0,
 \label{eq_phi}
\\
 \hat{\phi}(-L)=\hat{\phi}(L)=0,
  \label{eq_bc}
\end{align}
where $c=\omega/k$ is a complex phase speed. If this equation has a nontrivial solution
$\hat{\phi}$ for $c$ with a positive imaginary part, ${\rm Im}\,c>0$, the linearized system
\eqref{eq_w} is
spectrally unstable due to an exponentially growing eigenmode.

\section{Variational stability criterion}\label{sec:theorem}

Hamiltonian structure of the linearized vorticity
equation~\eqref{eq_w} is highly related to its adjoint equation for $\xi(x,t)$~\cite{Balmforth2,Arnold,Hirota};
\begin{align}
 i\partial_t\xi=&kU\xi+k\Ope{G}(U''\xi)\nonumber\\
 =:&k\Ope{L}^*\xi,\label{eq_xi}
\end{align}
where $\Ope{L}^*$ is the adjoint operator of $\Ope{L}$
with respect to the inner product, 
\begin{align}
 \langle
 \xi,\eta\rangle=\int_{-L}^{L}\xi(x)\eta(x)dx\quad\mbox{for}\ \forall\xi,\eta.
\end{align}
Since the relation $U''\Ope{L}^*=\Ope{L}U''$ holds,
$w=-U''\xi$ is found to be a
solution of \eqref{eq_w} if $\xi$ is a solution of \eqref{eq_xi}.
This class of perturbations belonging to the range of $U''$ is said to be isovortical because the vorticity disturbance ($w$) is induced by a displacement ($\xi$) of the fluid while preserving  the conservation law of circulation~\cite{Arnold}. In this manner,   Arnold~\cite{Arnold} derived a constant of motion,
\begin{align}
 \delta^2H=\int_{-L}^{L}\overline{\xi}U''\left[U\xi+\Ope{G}(U''\xi)\right]dx,
\end{align}
where $\overline{\xi}$ denotes the complex conjugate of $\xi$.  Arnold showed that this is  the second variation of the energy  with respect to the isovortical variation, while in \cite{Balmforth2} it was shown that this quantity is in fact the Hamiltonian for the linear Hamiltonian dynamics and there the diagonalizing transformation to action-angle variables was first obtained.    When $U(x)$ has only one inflection point $x_I$ [i.e., $U''(x_I)=0$], then in a frame moving at the velocity
$U_I=U(x_I)$ Arnold replaced  $U$ by  $U-U_I$ in $\delta^2H$ to obtain 
\begin{align}
 \delta^2H_I=&\int_{-L}^{L}\overline{\xi}U''\left[(U-U_I)\xi+\Ope{G}(U''\xi)\right]dx\nonumber\\
 =&\int_{-L}^{L}\overline{w}\left(\frac{U-U_I}{U''}+\Ope{G}\right)w dx.
\end{align}
In \cite{Balmforth2} it was shown explicitly that $\delta^2H_I$ is in fact  the second variation of the full Hamiltonian in the inertial frame boosted by velocity $U_I=U(x_I)$ by adding the appropriate total momentum.

Thus,  we have a version of Dirichlet's theorem, where the shear flow $U(x)$ is stable in the sense of Lyapunov,   if the quadratic form $\delta^2H_I$ is either positive or negative definite, i.e., $\exists \epsilon>0$  such that $\delta^2H_I\ge
\epsilon\langle\overline{w},w\rangle$ or $-\delta^2H_I\ge \epsilon\langle\overline{w},w\rangle$.    If $(U-U_I)/U''>0$ for all $x\in[-L,L]$, $\delta^2H_I$ is clearly
positive definite and hence proves stability, which implies that  this variational criterion of Arnold ~\cite{Arnold1,Arnold2,Arnold3} applies to a larger class of flow profiles than Rayleigh-Fj\o rtoft's stability theorem~\cite{Rayleigh,Fjortoft}.  However, all these criteria, including a generalization by Barston \cite{Barston},  are still sufficient conditions  for stability and, hence, are  indeterminate when  $\delta^2H_I$ is indefinite, since as discussed in Sec.~\ref{introduction} there could be negative energy modes. 

In this work, we  obtain an improved variational criterion, but this requires introducing the following assumptions on $U(x)$.
\begin{flushleft}
 {\bf Assumption} 

(A1) {\it 
$U(x)$ is an analytic and bounded function on $[-L,L]$.
}

 (A2) {\it 
$U(x)$ is strictly monotonic [i.e., $U'(x)\ne0$ for all
$x$] and, if $U''(x_I)=0$ at $x=x_I$, then $U'''(x_I)\ne0$.
}
\end{flushleft}
The last statement implies that the inflection point $x_I$  must
be a simple zero of $U''(x)$ and the sign of $U''(x)$ must change at $x=x_I$.
We expect that it is not difficult to relax 
 these restrictions on $U(x)$ {\it except for} the monotonicity.
To simplify our mathematical arguments we will not pursue generalization in the present work,  but we do  remark upon this point  in our summary of Sec.~\ref{summary}. 

Our main result is that a necessary and sufficient condition for spectral
stability is attained by the following variational criterion.

\begin{theorem}\label{th:multiple}
Let $U(x)$ satisfy (A1) and (A2). Denote the inflection
points of $U$ by $x_{In}$, $n=1,2,\dots,N$, and define
a quadratic form $Q=\langle\xi,\Ope{H}\xi\rangle$ as 
\begin{align}
 Q=\nu\int_{-L}^{L}\xi\prod_{n=1}^N\left[U-U_{In}+U''\Ope{G}\right]U''\xi dx,
\end{align}
where $U_{In}=U(x_{In})$ and  either $\nu=1$
or $\nu=-1$ is chosen such that
\begin{align}
 \nu U''\prod_{n=1}^N(U-U_{In})\le0\label{nu}
\end{align}
holds for all $x\in[-L,L]$.
The equation \eqref{eq_w} is spectrally stable if and
only if
\begin{align}
 Q=\langle\xi,\Ope{H}\xi\rangle\le0
\quad\mbox{for all}\quad\xi\in\Spa{L}^2.\label{condition}
\end{align}
\end{theorem}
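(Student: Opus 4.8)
The plan is to prove the equivalence in \eqref{condition} by treating its two implications separately, with a single algebraic identity serving as the backbone of both. The key observation is that if $\xi$ is an eigenfunction of the adjoint operator, $\Ope{L}^*\xi=c\xi$ (equivalently, by $U''\Ope{L}^*=\Ope{L}U''$, a solution $w=-U''\xi$ of Rayleigh's equation \eqref{eq_phi} at phase speed $c$), then $\Ope{G}(U''\xi)=(c-U)\xi$, which collapses the operator product in $\Ope{H}$. Applying the factors from the inside out and using that each $\left(U-U_{In}\right)+(c-U)=c-U_{In}$ is a constant that passes through the remaining factors, one finds by induction that $\prod_{n=1}^N\left[U-U_{In}+U''\Ope{G}\right]U''\xi=\prod_{n=1}^N(c-U_{In})\,U''\xi$, so that $\langle\xi,\Ope{H}\xi\rangle=\nu\prod_{n=1}^N(c-U_{In})\int_{-L}^{L}U''\xi^2\,dx$ on any eigenfunction. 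I would also record at the outset that $\Ope{H}$ is self-adjoint (the factors fail to commute only through the \emph{constants} $U_{Im}-U_{In}$, which cancel), so $Q$ is a genuine real quadratic form and the search in \eqref{condition} may be restricted to real $\xi$ without loss.

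For the reverse implication I would argue the contrapositive: spectral instability forces $Q>0$ for some $\xi$. If $c$ with $\mathrm{Im}\,c>0$ is an eigenvalue with eigenfunction $\xi_c$, then $\overline{\xi_c}$ is an eigenfunction for $\bar c$, and since no real critical layer occurs, $\xi_c=-\hat{\phi}/(c-U)$ is smooth and lies in $\Spa{L}^2$. First I note the standard vanishing of the unstable signature: because $U''\Ope{L}^*$ is self-adjoint, $c\int_{-L}^{L}U''|\xi_c|^2\,dx=\int_{-L}^{L}\overline{\xi_c}\,U''\Ope{L}^*\xi_c\,dx$ is real, which forces $\int U''|\xi_c|^2\,dx=0$ when $\mathrm{Im}\,c\neq0$. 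Consequently the quadratic form restricted to $\mathrm{span}_{\Bod{R}}\{\mathrm{Re}\,\xi_c,\mathrm{Im}\,\xi_c\}$ is off-diagonal free, and the identity above gives $Q\!\left(z\xi_c+\bar z\,\overline{\xi_c}\right)=2\nu\,\mathrm{Re}\!\left[z^2\prod_{n=1}^N(c-U_{In})\,A\right]$ with $A=\int_{-L}^{L}U''\xi_c^2\,dx$. Since $z^2$ sweeps all of $\Bod{C}$ and $\prod_n(c-U_{In})\neq0$ (each $U_{In}\in\Bod{R}$ while $c\notin\Bod{R}$), the sign of $Q$ can be driven positive as long as $A\neq0$.

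For the forward implication (stability $\Rightarrow Q\le0$) I would carry out the rigorous spectral decomposition advertised in Sec.~\ref{sec:proof}. Working in the isovortical class $w=U''\xi$ on the appropriate $\Spa{L}^2$ space, stability means the spectrum of $\Ope{L}$ is real: a continuous spectrum filling $[\,U_{\min},U_{\max}\,]$ together with at most real discrete eigenvalues. Diagonalizing $\delta^2H$ into action--angle form along the lines of \cite{Balmforth2,Hirota,Hirota3} produces a continuum contribution weighted by a signature density whose sign, under the monotonicity assumption (A2), is controlled by $\mathrm{sgn}\,U''$ at the critical layer $x_\ast=U^{-1}(u)$. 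The eigenfunction identity shows that the continuum part of $Q$ multiplies this density by $\nu\prod_n(u-U_{In})$, so that the sign condition \eqref{nu}, $\nu U''\prod_n(U-U_{In})\le0$, is \emph{exactly} what renders this contribution non-positive pointwise in $u$. Each real discrete mode $c_d$ contributes $\nu\prod_n(c_d-U_{In})\int U''\xi_d^2\,dx$, and I would dispatch it by the same signature bookkeeping, showing its Kre\u{i}n signature is compatible with \eqref{nu}; summing the mutually orthogonal continuum and discrete pieces then yields $Q\le0$ for all $\xi$.

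The hard part is this continuum signature analysis. Because $\Ope{L}$ is non-self-adjoint, making the signature density rigorous requires a careful treatment of the singular critical-layer generalized eigenfunctions, the Plemelj/Sokhotski limits as $\mathrm{Im}\,c\to0^+$, and completeness of the resulting expansion, and it is here that the analyticity assumption (A1) is indispensable; verifying that \eqref{nu} reproduces precisely the sign of the continuum weight is the step that upgrades Dirichlet-type sufficiency to a sharp criterion. A secondary delicate point is establishing $A\neq0$ for genuinely unstable modes in the reverse direction, which I expect to follow from (A1) and the eigenfunction's structure near the complex critical layer (failing that, one perturbs $\xi_c$ by adjacent continuum contributions). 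I would close, as in Sec.~\ref{sec:proof}, by enlarging the search space from the isovortical class to all of $\Spa{L}^2$ and checking that this does not change the supremum of $Q$, thereby preserving \eqref{condition} while rendering the maximization more tractable in practice.
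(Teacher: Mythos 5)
Your architecture coincides with the paper's: restrict to the isovortical class, split the spectrum of $k\Ope{L}^*$ into the real continuum $\sigma_c$ and complex eigenvalue pairs, show the continuum contribution to $Q$ is non-positive precisely because of the sign condition \eqref{nu}, show the complex-pair contribution is indefinite via the collapsed product $\nu\prod_n(c-U_{In})\langle\hat{\xi}_c,U''\hat{\xi}_c\rangle$, and finally enlarge the search space to $\Spa{L}^2$ using the smoothing property of the non-multiplicative part of $\Ope{H}$. Your eigenfunction identity and the computation $Q=2\nu\,\mathrm{Re}\,[z^2 p(c)A]$ for an unstable pair reproduce the paper's $Q|_{\sigma\backslash\sigma_c}$ exactly, and your reading of the continuum weight matches $\hat h(\omega)=\nu k U''(x_c)\bigl[|F|^2+|G|^2\bigr]/\bigl(p(\omega)|U'(x_c)|\bigr)\le 0$.

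The genuine gap is in the forward direction, where you propose to "dispatch" real discrete eigenvalues "by the same signature bookkeeping, showing its Kre\u{\i}n signature is compatible with \eqref{nu}." That cannot work as stated: the signature of a real discrete mode $c_d$ is $\mathrm{sgn}\,\langle\overline{\hat\xi_d},U''\hat\xi_d\rangle$, a global quantity not controlled by the pointwise condition \eqref{nu}, and if a stable flow admitted such a mode with $\nu\,p(c_d)\langle\overline{\hat\xi_d},U''\hat\xi_d\rangle>0$ the theorem would simply be false. The actual mechanism (Proposition~\ref{th:discrete}, built on the Sturm-type zero-counting Lemma~\ref{lem:Sturm} and the Frobenius analysis at the critical layer) is that under (A1) and (A2) the Wronskian $W(\omega\pm i0)$ can vanish only at $\omega=kU_{In}$ --- there are no real eigenvalues off the continuum, and none embedded at non-inflection speeds because $\mathrm{Im}\,\Phi_<(L,\omega\pm i0)\neq 0$ there --- so every neutral discrete mode sits exactly where $p(c_d)=\prod_n(c_d-U_{In})=0$ and its contribution to $Q$ is annihilated; one must also show these zeros of $W$ are simple so that $p(\Omega)/W(\Omega)$ stays bounded and the continuum integral converges. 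This localization of neutral modes is where monotonicity (A2) does its real work and is the step your outline omits; attributing the difficulty solely to the Plemelj limits and completeness of the continuum expansion misplaces the hard point. (Your concern about $A\neq 0$ is legitimate but secondary; the paper asserts the nondegeneracy $\langle\hat\xi_j,U''\hat\xi_j\rangle\neq 0$ for the unstable pair without further argument.)
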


Here, $\Spa{L}^2$ denotes the real Hilbert space
on $[-L,L]$ defined by the norm
$\|\xi\|_{\Spa{L}^2}^2=\langle\xi,\xi\rangle$. 
Note that the function space for disturbances $w$ and $\xi$ is originally
the complex Hilbert space $\Spa{L}^2+i\Spa{L}^2$ defined by
the norm $\|\xi\|_{\Spa{L}^2+i\Spa{L}^2}^2=\langle\overline{\xi},\xi\rangle$.
Since $\Ope{H}$ is a real selfadjoint operator,
the condition \eqref{condition} can be naturally replaced by $Q=\langle\overline{\xi},\Ope{H}\xi\rangle\le0$ for all
$\xi\in\Spa{L}^2+i\Spa{L}^2$.

This $Q=\langle\overline{\xi},\Ope{H}\xi\rangle$ is
equivalent to the constant of motion derived by Barston [11] (except for
the coefficient $\nu$), and $Q=-\nu\delta^2H_I$ for the case of single inflection point. Hence,
 our theorem claims that Arnold-Barston's stability criteria (Dirichlet sufficient stability conditions) are in fact necessary and sufficient when $U(x)$ satisfies (A1) and (A2).

We remark that
$Q=\langle\overline{\xi},\Ope{H}\xi\rangle$ no longer represents the
second variation of the energy for the case of multiple infection points.
Actually, it belongs to the  class of infinite number of
constants of motion introduced in \cite{Oberman,Case3,Barston}.

\begin{proposition}
Let $f(c)$ be any real polynomial of $c\in\Bod{R}$. Then,
\begin{align}
 Q_f=\int_{-L}^{L}\overline{\xi}U''f(\Ope{L}^*)\xi dx
=\int_{-L}^{L}\overline{\xi}f(\Ope{L})(U''\xi)dx
\in\Bod{R},
\end{align} 
is a constant of motion for the equation \eqref{eq_xi}.
\end{proposition}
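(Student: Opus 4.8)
The plan is to reduce every claim in the proposition to two algebraic facts about the operators: the intertwining relation $U''\Ope{L}^*=\Ope{L}U''$ stated in the excerpt, and the adjoint relation $\langle\Ope{L}\phi,\psi\rangle=\langle\phi,\Ope{L}^*\psi\rangle$ that defines $\Ope{L}^*$. Since (A1) makes $U$ and $U''$ bounded while $\Ope{G}$ is an integral operator with continuous kernel, all operators involved are bounded on $\Spa{L}^2+i\Spa{L}^2$, so polynomials in $\Ope{L}$ and $\Ope{L}^*$ are well defined and there are no domain subtleties to worry about.

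First I would promote the given intertwining to the level of polynomials. A straightforward induction on the power $n$, inserting $U''\Ope{L}^*=\Ope{L}U''$ at each step, gives $U''(\Ope{L}^*)^n=\Ope{L}^nU''$, and linearity then yields
\begin{align}
 U''f(\Ope{L}^*)=f(\Ope{L})U''\qquad\text{for every real polynomial }f.\label{eq:intertwine}
\end{align}
Applying \eqref{eq:intertwine} to $\xi$ shows that the two integrands in the proposition coincide pointwise, so the two displayed expressions for $Q_f$ are literally equal; this settles the first equality.

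The organizing device for the remaining claims is the real symmetric bilinear form $B(\phi,\psi):=\langle\phi,U''\psi\rangle=\int_{-L}^{L}\phi\,U''\psi\,dx$, in terms of which $Q_f=B(\overline{\xi},f(\Ope{L}^*)\xi)$. The key structural fact is that $\Ope{L}^*$ is self-adjoint with respect to $B$: using the adjoint relation and then \eqref{eq:intertwine} with $f=\mathrm{id}$,
\begin{align}
 B(\Ope{L}^*\phi,\psi)=\langle\Ope{L}^*\phi,U''\psi\rangle=\langle\phi,\Ope{L}(U''\psi)\rangle=\langle\phi,U''\Ope{L}^*\psi\rangle=B(\phi,\Ope{L}^*\psi),
\end{align}
so $f(\Ope{L}^*)$ is $B$-self-adjoint for every $f$. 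Reality of $Q_f$ then follows by conjugating: because $B$ has real kernel and $f(\Ope{L}^*)$ is a real operator, $\overline{Q_f}=B(\xi,f(\Ope{L}^*)\overline{\xi})$, and the symmetry of $B$ together with $B$-self-adjointness returns this to $B(\overline{\xi},f(\Ope{L}^*)\xi)=Q_f$, whence $Q_f\in\Bod{R}$.

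Finally, for the constant-of-motion claim I would differentiate $Q_f$ along a solution of \eqref{eq_xi}, using $\partial_t\xi=-ik\Ope{L}^*\xi$ and $\partial_t\overline{\xi}=ik\Ope{L}^*\overline{\xi}$ (the latter because $\Ope{L}^*$ is real). This gives
\begin{align}
 \frac{d}{dt}Q_f=ik\,B(\Ope{L}^*\overline{\xi},f(\Ope{L}^*)\xi)-ik\,B(\overline{\xi},f(\Ope{L}^*)\Ope{L}^*\xi).
\end{align}
Moving $\Ope{L}^*$ across the first form by its $B$-self-adjointness, and using that $\Ope{L}^*$ commutes with $f(\Ope{L}^*)$, makes the two terms identical, so the derivative vanishes. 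I do not anticipate a serious obstacle; the only points demanding care are the bookkeeping that distinguishes the Hermitian pairing $\langle\overline{\xi},\cdot\rangle$ from the bilinear form $B$, and the verification that the elementary intertwining lifts cleanly to polynomials, which is where the entire argument is anchored.
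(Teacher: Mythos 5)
Your proposal is correct and follows essentially the same route as the paper's (very terse) proof: both arguments rest on the intertwining relation $U''\Ope{L}^*=\Ope{L}U''$ lifted to polynomials and on the commutation $\Ope{L}^*f(\Ope{L}^*)=f(\Ope{L}^*)\Ope{L}^*$, used to show reality and $dQ_f/dt=0$ directly. You have simply filled in the bookkeeping (the induction, the $B$-self-adjointness, and the careful distinction between the bilinear pairing and complex conjugation) that the paper leaves implicit.
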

\begin{proof}
Using $U''\Ope{L}^*=\Ope{L}U''$ and $\Ope{L}^*f(\Ope{L}^*)
=f(\Ope{L}^*)\Ope{L}^*$, we can directly show that $Q_f$ is real and $dQ_f/dt=0$.
\end{proof}
Therefore, we have specifically chosen $f(c)=\nu\prod_{n=1}^N(c-U_{In})$
to generate $Q$ of Theorem~\ref{th:multiple}.

The proof of Theorem~\ref{th:multiple} is given in the next section and our strategy is as
 follows. First, we reduce the function space $\Spa{L}^2+i\Spa{L}^2$
 to a smaller one $X+iX$, to which unstable eigenfunctions must belong.
Then, we decompose the spectrum
$\sigma\subset\Bod{C}$ of the operator $k\Ope{L}^*$ into the
neutrally stable part $\sigma_c\subset\Bod{R}$ (which is mostly a
continuous spectrum) and the remaining part $\sigma\backslash\sigma_c\subset\Bod{C}\backslash\Bod{R}$
(which is a set of pairs of growing and damping eigenvalues).
By proving that $Q\le 0$ for all the neutrally stable disturbance $\xi$ belonging to
$\sigma_c$, we will claim that $Q>0$ for some $\xi\in X$ indicates the
existence of at least one unstable eigenvalue,
$\omega\in\sigma\backslash\sigma_c$ that has a growth rate ${\rm Im}\,\omega>0$.

\section{Proof of Theorem~\ref{th:multiple}}\label{sec:proof}

\subsection{Reduction to isovortical disturbance}\label{sec:isovortical}

For
the purpose of seeking unstable eigenmodes, we restrict the function
space of disturbance to $X+iX$, where
\begin{align}
 X=\Spa{H}_0^1\cap\Spa{H}^2.
\end{align}
As usual, we denote by $\Spa{H}^n$ the real Sobolev space on $[-L,L]$;
\begin{align}
 \Spa{H}^n=\left\{\xi\in\Spa{L}^2\ \bigg|\ \sum_{j\le n}\|\partial_x^j\xi\|_{\Spa{L}^2}<\infty\right\}
\end{align}
and $\Spa{H}_0^1$ denotes the subspace of $\Spa{H}^1$ in which
 the
 boundary conditions, $\xi(-L)=\xi(L)=0$, are imposed on $\xi\in\Spa{H}_0^1$.
The restriction from $\Spa{L}^2+i\Spa{L}^2$ to $X+iX$ is feasible when $U(x)$ is a sufficiently smooth function. In this work, we simply assume (A1) is sufficient for the following:

\begin{proposition}\label{th:isovortical}
Let $U(x)$ satisfy (A1).
 Given the initial condition $\xi(x,0)=\xi_0(x)\in X+iX$,
        the solution $\xi(x,t)$ of \eqref{eq_xi} belongs to
       $X+iX$ for all
       $t$. Moreover, $w=-U''\xi\in X+iX$ is a solution of \eqref{eq_w}.
\end{proposition}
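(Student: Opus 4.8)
The plan is to prove that $\Ope{L}^*$ is a \emph{bounded} linear operator on $X+iX$ (equipped with its $\Spa{H}^2$ topology); once this is known, the evolution \eqref{eq_xi} is generated by a norm-convergent exponential series that manifestly keeps the orbit inside $X+iX$. The central step, carrying all the real content, is to record the mapping properties of the two pieces of $\Ope{L}^*\xi=U\xi+\Ope{G}(U''\xi)$. Writing $M_U$ for multiplication by $U$, I would argue as follows. From $w=\Ope{G}^{-1}\phi=-\phi''+k^2\phi$ together with the boundary conditions \eqref{eq_bc}, the operator $\Ope{G}$ is exactly the inverse of $-\partial_x^2+k^2$ under homogeneous Dirichlet conditions; by elliptic regularity it maps $\Spa{L}^2$ boundedly into $\Spa{H}^2\cap\Spa{H}_0^1=X$, the Dirichlet condition being built into the kernel $g(x,s)$. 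Since (A1) makes $U$ and hence $U',U''$ smooth and bounded on the compact interval $[-L,L]$, the operator $M_U$ maps $X$ boundedly into $X$: it preserves $\Spa{H}^2$ by the product rule, and it preserves the boundary condition because $(U\xi)(\pm L)=U(\pm L)\,\xi(\pm L)=0$. Likewise $M_{U''}$ maps $X$ boundedly into $\Spa{L}^2$, which then composes with $\Ope{G}:\Spa{L}^2\to X$. Combining these, $\Ope{L}^*=M_U+\Ope{G}\,M_{U''}$ is bounded from $X$ into $X$, and the identical estimates hold on the complexification $X+iX$.

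Granting boundedness, the remaining step is standard Banach-space theory. Since $-ik\Ope{L}^*$ is a bounded operator on the complete space $X+iX$, it is the generator of the uniformly continuous one-parameter group
\begin{align}
 \xi(t)=e^{-ik\Ope{L}^*t}\xi_0=\sum_{n=0}^{\infty}\frac{(-ikt)^n}{n!}(\Ope{L}^*)^n\xi_0,
\end{align}
the series converging in operator norm on $X+iX$ for every $t\in\Bod{R}$; hence $\xi(t)\in X+iX$ for all $t$. Because $\Ope{L}^*$ is equally bounded on the larger space $\Spa{L}^2+i\Spa{L}^2$, this $X$-valued orbit coincides with the unique $\Spa{L}^2$-solution of \eqref{eq_xi}, so the notion of ``solution'' is unambiguous.

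For the final assertion, set $w=-U''\xi$. Its regularity is immediate: $U''\xi\in\Spa{H}^2$ and $(U''\xi)(\pm L)=0$, so $w\in X+iX$ whenever $\xi\in X+iX$. That $w$ solves \eqref{eq_w} is then precisely the intertwining relation $U''\Ope{L}^*=\Ope{L}U''$ recorded after \eqref{eq_xi}: multiplying \eqref{eq_xi} by $-U''$ gives $i\partial_t w=-k\,U''\Ope{L}^*\xi=-k\,\Ope{L}(U''\xi)=k\Ope{L}w$.

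I anticipate the only genuine obstacle to be the first paragraph, namely the verification that $\Ope{L}^*$ is bounded on $X$; within it the delicate points are the two boundary-condition checks (that both $M_U$ and $\Ope{G}$ return functions vanishing at $x=\pm L$) and the two-derivative gain furnished by $\Ope{G}$ through elliptic regularity. Everything past boundedness is routine. I note that analyticity is not actually needed at this stage---$C^2$ regularity of $U$ with bounded $U''$ already suffices for boundedness on $X$---which is consistent with the paper's remark that (A1) is comfortably strong enough for this reduction.
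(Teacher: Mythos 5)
Your proof is correct and follows the same route as the paper: establish that $\Ope{G}:\Spa{L}^2\to X$ is bounded so that $\Ope{L}^*$ is a bounded operator on $X+iX$, exponentiate to keep the orbit in $X+iX$, and use the intertwining relation $U''\Ope{L}^*=\Ope{L}U''$ for the final assertion. You simply supply the routine details (product rule, boundary-condition checks, norm-convergent series) that the paper leaves implicit.
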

\begin{proof}
 By noting that $\Ope{G}:\Spa{L}^2\rightarrow X$
       is one to one and onto, we find that $\Ope{L}^*$ is a
 bounded operator on
 $X+iX$ and, hence, the solution $\xi=e^{-ik\Ope{L}^*t}\xi_0$ belongs to $X+iX$ for all $t$.
Using the property $U''\Ope{L}^*=\Ope{L}U''$, it is obvious that $w=-U''\xi$ is
 automatically a solution of \eqref{eq_w} and also belongs to $X+iX$.
\end{proof}

When $U''$ vanishes somewhere on $[-L,L]$, the function space of
$w=-U''\xi$ is further restricted to the range of $U''$ (i.e., the isovortical
disturbance). We can find that all unstable eigenfunctions
       must belong to this space as follows.

\begin{proposition}
Let $U(x)$ satisfy (A1).
The equation \eqref{eq_w} is spectrally stable if and only if the
 adjoint equation \eqref{eq_xi} for $\xi\in X+iX$ is spectrally stable. 
\end{proposition}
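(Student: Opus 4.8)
The plan is to exhibit an explicit bijection between the unstable eigenvalues of $\Ope{L}$ (equivalently, of Rayleigh's equation \eqref{eq_phi}) and those of $\Ope{L}^*$ acting on $X+iX$, and thereby conclude that the two eigenvalue problems are simultaneously stable or unstable. Since spectral stability here means the absence of any eigenvalue with $\mathrm{Im}\,\omega>0$, and since the neutral (continuous) spectrum lies on the real axis, I would work entirely in the open upper half-plane $\mathrm{Im}\,c>0$. There the key analytic fact is that, because $U(x)$ is real, $|U-c|\ge\mathrm{Im}\,c>0$, so $U-c$ is boundedly invertible as a multiplication operator. The algebraic engine in both directions is the intertwining relation $U''\Ope{L}^*=\Ope{L}U''$ already established in the text.

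For the forward direction I would start from an unstable eigenpair $(c,\xi)$ of $\Ope{L}^*$ with $\xi\in X$ and set $w=-U''\xi$. The intertwining relation gives $\Ope{L}w=cw$ at once, so $c$ is an eigenvalue of $\Ope{L}$. To confirm $w\neq0$ I would argue by contradiction: if $U''\xi=0$ then $\Ope{L}^*\xi=U\xi=c\xi$ forces $(U-c)\xi=0$, and invertibility of $U-c$ yields $\xi=0$, a contradiction. This is the easy direction.

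The substance is the reverse direction. Given an unstable eigenpair $(c,w)$ of $\Ope{L}$ with $\mathrm{Im}\,c>0$ and stream function $\phi=\Ope{G}w$, I would define
\begin{align}
  \xi=\frac{\phi}{U-c}.
\end{align}
Rayleigh's equation \eqref{eq_phi} reads $(U-c)w+U''\phi=0$, so $U''\xi=U''\phi/(U-c)=-w$, whence $\Ope{G}(U''\xi)=-\Ope{G}w=-\phi$ and
\begin{align}
  \Ope{L}^*\xi=U\xi+\Ope{G}(U''\xi)=\frac{U\phi}{U-c}-\phi=\frac{c\,\phi}{U-c}=c\xi.
\end{align}
Thus $c$ is an eigenvalue of $\Ope{L}^*$; moreover $\xi\neq0$, since $\phi=0$ would force $w=\Ope{G}^{-1}\phi=0$. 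As a byproduct, the identity $w=-U''\xi$ shows that every unstable eigenfunction of $\Ope{L}$ is isovortical (lies in the range of $U''$), which is the remark preceding the proposition.

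The main obstacle, and the only place where Assumption (A1) is genuinely needed, is showing that $\xi$ actually lies in $X=\Spa{H}_0^1\cap\Spa{H}^2$ rather than merely in $\Spa{L}^2$. Here I would use that, because $U$ is analytic and bounded and $\mathrm{Im}\,c>0$, the multiplier $1/(U-c)$ is analytic on $[-L,L]$ with uniformly bounded derivatives; multiplication by such a multiplier preserves $\Spa{H}^2$ regularity via the product rule, while $\phi(\pm L)=0$ gives $\xi(\pm L)=0$, preserving the homogeneous Dirichlet conditions. Combining the two directions then shows that a point $c$ with $\mathrm{Im}\,c>0$ is an eigenvalue of $\Ope{L}$ if and only if it is an eigenvalue of $\Ope{L}^*$ on $X+iX$, which is exactly the claimed equivalence of spectral stability.
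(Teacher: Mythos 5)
Your proposal is correct and follows essentially the same route as the paper: the map $\hat{\xi}=\Ope{G}\hat{w}/(U-c)$ (the paper's $-\Ope{G}\hat{w}/(c-U)$) in one direction and $\hat{w}=-U''\hat{\xi}$ via the intertwining relation in the other, with invertibility of $U-c$ for $\mathrm{Im}\,c>0$ doing the work. You merely supply details the paper leaves implicit, namely the explicit verification that $U''\hat{\xi}\not\equiv0$ and the regularity argument placing $\hat{\xi}$ in $X+iX$.
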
 
\begin{proof}
If $c\in\Bod{C}$ and $\hat{w}=-\hat{\phi}''+k^2\hat{\phi}\in\Spa{L}^2+i\Spa{L}^2$ satisfy Rayleigh's equation
with a growth rate ${\rm Im}\, c>0$, then $(c-U)\ne0$ holds everywhere and 
\begin{align}
\hat{\xi}=-\frac{1}{c-U}\Ope{G}\hat{w} \ \in X+iX
\end{align}
is found to be an eigenfunction of the adjoint equation \eqref{eq_xi}
with the same eigenvalue $c$. Hence, the adjoint equation on $X+iX$ is also spectrally unstable.

Conversely, if $c$ and $\hat{\xi}\in X+iX$ satisfy the adjoint
eigenvalue problem with ${\rm Im}\,c>0$, then $U''\hat{\xi}$ is not
identically zero and $\hat{w}=-U''\hat{\xi}$ satisfies the Rayleigh equation with the same $c$.  
\end{proof}

\subsection{Spectral decomposition}\label{sec:spectral}

\begin{figure}
\centering\includegraphics[width=6cm]{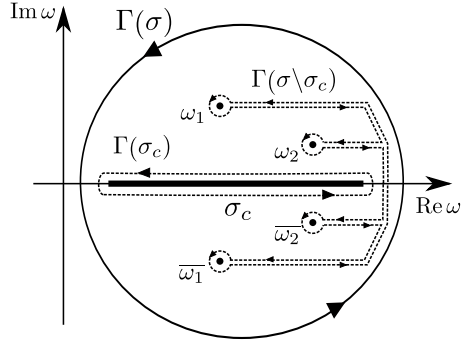}
 \caption{Schematic view of contour integral
 $\Gamma(\sigma)=\Gamma(\sigma_c)\cup\Gamma(\sigma\backslash\sigma_c)$
 in the case of $\sigma\backslash\sigma_c=\{\omega_1,\omega_2,\overline{\omega_1},\overline{\omega_2}\}$.
}
\label{fig:contours}
\end{figure}

Next, we investigate the spectrum $\sigma\subset\Bod{C}$ of the operator $k\Ope{L}^*$.
For a given initial condition $\xi(x,0)=\xi_0(x)\in X+iX$, let
$\Xi(x,\Omega)\in X+iX$ be the solution of 
\begin{align}
 (\Omega-k\Ope{L}^*)\Xi(x,\Omega)=&\xi_0(x),\label{eq_Xi}
\end{align}
for $\Omega\in\Bod{C}\backslash\sigma$.
Then, the solution of \eqref{eq_xi} is formally represented by the
Dunford integral (or the inverse Laplace transform),
\begin{align}
 \xi(x,t)=\frac{1}{2\pi i}\oint_{\Gamma(\sigma)}\Xi(x,\Omega)e^{-i\Omega t}d\Omega,\label{Dunford}
\end{align}
where $\Gamma(\sigma)$ is a path of integration that encloses
all the spectrum $\sigma\subset\Bod{C}$ of $k\Ope{L}^*$ counterclockwise
(as shown in FIG.~\ref{fig:contours}). 
In terms of $\Phi(x,\Omega)=-(\Omega/k-U)\Xi(x,\Omega)$, the equation
\eqref{eq_Xi} is transformed into
\begin{align}
\Ope{E}(\Omega)\Phi(x,\Omega)=\frac{1}{k}(\xi_0''-k^2\xi_0),
\label{eq_Phi}\\
\Phi(-L,\Omega)=\Phi(L,\Omega)=0,
\label{eq_BC}
\end{align}
where 
\begin{align}
 \Ope{E}(\Omega)=-\frac{\partial^2}{\partial x^2}+k^2-\frac{kU''}{\Omega-kU}.
\end{align}
 Suppose that we have solved
\begin{align}
 \Ope{E}(\Omega)\Phi_<(x,\Omega)=0,\quad \Phi_<(-L,\Omega)=0,\quad
 \Phi_<'(-L,\Omega)=1,\label{Phi_<}
\end{align}
and
\begin{align}
 \Ope{E}(\Omega)\Phi_>(x,\Omega)=0,\quad \Phi_>(L,\Omega)=0,\quad
\Phi_>'(L,\Omega)=-1,
\end{align}
to obtain two linearly independent solutions $\Phi_<(x,\Omega)$ and
$\Phi_>(x,\Omega)$.
Then, by
using the method of Green's function, the solution of \eqref{eq_Phi} and  \eqref{eq_BC} can be expressed as
\begin{align}
 \Phi(x,\Omega)=\frac{1}{W(\Omega)}\int_{-L}^{L}\Phi_G(x,s,\Omega)\frac{1}{k}[\xi_0''(s)-k^2\xi_0(s)]ds,
\end{align}
where
\begin{align}
 \Phi_G(x,s,\Omega)=\Phi_>(s,\Omega)\Phi_<(x,\Omega)Y(s-x)
+\Phi_<(s,\Omega)\Phi_>(x,\Omega)Y(x-s),
\end{align}
with $Y(x)$ being the Heaviside function, and
\begin{align}
 W(\Omega)=&-\Phi_<(x,\Omega)\Phi_>'(x,\Omega)+\Phi_<'(x,\Omega)\Phi_>(x,\Omega)\nonumber\\
=&\Phi_>(-L,\Omega)\nonumber\\
=&\Phi_<(L,\Omega)
\end{align}
is the Wronskian.  When  $\Omega$ avoids the range of $kU$, denoted by
\begin{align}
 \sigma_c=\{kU(x)\in\Bod{R}\ |\ x\in[-L,L]\},
\end{align}
the operator $\Ope{E}(\Omega)$ is
 non-singular and both $\Phi_<$ and $\Phi_>$ are regular functions of
 $\Omega$. Therefore, the spectrum $\sigma$ of the operator $k\Ope{L}^*$ is composed of a continuous spectrum $\sigma_c$ and some eigenvalues
 $\omega_j\in\Bod{C}, j=1,2,\dots$ (i.e.,
 point  spectra) that  satisfy $W(\omega_j)=0$. 
Due to the property $W(\Omega)=\overline{W(\overline{\Omega})}$, the
 eigenvalues always exist as pairs of growing ($\omega_j$) and damping
 ($\overline{\omega_j}$) ones when ${\rm Im}\,\omega_j>0$,  a spectral property of Hamiltonian systems (cf.\  \cite{Hagstrom1,Hagstrom2,Hagstrom3}). 

For the purpose of showing the existence or nonexistence of such
eigenvalues, we will frequently use  the following Lemma.

 \begin{lemma}\label{lem:Sturm}
Let $U(x)$ satisfy (A1) and (A2). For any $\omega\in\Bod{R}$,
the general solution $\Phi(x,\omega)$ of
 $\Ope{E}(\omega)\Phi(x,\omega)=0$ has at most one zero on
each of the following intervals: 

(i) $[-L,L]$ if there is no critical layer, i.e., $\nexists\, \omega\in \Bod{R}$  that satisfies $\omega=kU(x)\,  \forall x\in [-L,L]$, 

(ii) $[-L,x_c]$ and $[x_c,L]$ if there is a critical
 layer $x_c\in[-L,L]$ that satisfies $\omega=kU(x_c)$.

\end{lemma}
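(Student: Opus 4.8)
The plan is to recognize $\Ope{E}(\omega)\Phi=0$, with $\omega=kc$ and $c\in\Bod{R}$, as Rayleigh's equation \eqref{eq_phi} and to pass to its self-adjoint form through the displacement variable. Setting $\Phi=(U-c)\Xi$ and inserting this into $\Phi''-k^2\Phi+U''\Phi/(c-U)=0$, the two $U''\Xi$ contributions cancel, and after multiplying by $(U-c)$ one is left with
\[
 \frac{d}{dx}\!\left[(U-c)^2\frac{d\Xi}{dx}\right]=k^2(U-c)^2\,\Xi .
\]
This is the classical device that trades the sign-indefinite coefficient $U''/(c-U)$ for the manifestly nonnegative weight $p:=(U-c)^2$. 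Since $U$ is monotonic by (A2), the equation $\omega=kU(x)$ has at most one root, so the alternatives (i) and (ii) are exhaustive; away from such a root $p>0$ and, as $U-c\neq0$ there, the zeros of $\Phi$ and of $\Xi$ coincide.

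Then I would establish disconjugacy of the $\Xi$-equation by an energy identity. If $\Xi$ vanished at two points $a<b$ with $p>0$ throughout $(a,b)$ (no interior critical layer), multiplying by $\Xi$ and integrating by parts yields
\[
 -\int_a^b p\,(\Xi')^2\,dx=k^2\int_a^b p\,\Xi^2\,dx ,
\]
the boundary terms dropping out because $\Xi(a)=\Xi(b)=0$. As $k>0$ and $p>0$ on $(a,b)$, the left-hand side is nonpositive and the right-hand side nonnegative, so both vanish and $\Xi\equiv0$ on $(a,b)$, a contradiction. Hence every nontrivial solution has at most one zero on any closed subinterval whose interior contains no critical layer. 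This settles (i) outright and disposes of the interiors of the two subintervals in (ii).

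What remains, and what I expect to be the crux, is to control the endpoint $x_c$ where the weight degenerates. Near $x_c$ one has $c-U\sim-U'(x_c)(x-x_c)$ with $U'(x_c)\neq0$ by (A2), so $p\sim U'(x_c)^2(x-x_c)^2$; a local (Frobenius) analysis of Rayleigh's equation gives indicial exponents $0$ and $1$, so $\Xi$ is either bounded at $x_c$ or blows up like $(x-x_c)^{-1}$. In the bounded case $\Xi=\Xi(x_c)+O((x-x_c)^2)$, whence $\Phi(x_c)=0$ but $p\,\Xi'\Xi=O((x-x_c)^3)\to0$; the vanishing boundary term then lets me rerun the energy identity on $[a,x_c]$ for any candidate zero $a\in[-L,x_c)$, again forcing $\Xi\equiv0$ and so excluding such $a$. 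In the singular case $\Phi(x_c)\neq0$, so $x_c$ is not a zero of $\Phi$ and the count is governed by the previous paragraph. Either way $\Phi$ has at most one zero on $[-L,x_c]$, and the mirror-image argument handles $[x_c,L]$.

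Finally I would note that when $x_c$ happens to be an inflection point the potential $U''/(c-U)$ is in fact regular there, since both $U''$ and $c-U$ vanish to first order (using $U'''(x_c)\neq0$ from (A2)); the same local exponents and the same $O((x-x_c)^3)$ boundary-term estimate hold, so no separate treatment of inflection critical layers is required.
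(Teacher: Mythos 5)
Your proof is correct and is essentially the paper's own argument: the energy identity you derive from the self-adjoint form $\bigl((U-c)^2\Xi'\bigr)'=k^2(U-c)^2\Xi$ is exactly the paper's identity \eqref{identity} in disguise, since $\Phi'-U'\Phi/(U-c)=(U-c)\Xi'$ and $\Phi\bigl(\Phi'-U'\Phi/(U-c)\bigr)=(U-c)^2\Xi\Xi'$. Your treatment of the critical-layer endpoint — Frobenius exponents, the dichotomy $\Phi(x_c)=0$ versus $\Phi(x_c)\neq0$, and the vanishing of the boundary term so the identity extends to $x_1=x_c$ — likewise mirrors the paper's part (ii), so there is nothing to add.
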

\begin{proof}

(i)  A consequence of  $\Ope{E}(\omega)\Phi(x,\omega)=0$ is the following identity (see the Appendix of \cite{Balmforth}):
\begin{align}
\left[\Phi\left(\Phi' -\frac{U'\Phi}{U-c}\right)\right]^{x_2}_{x_1} =\int_{x_1}^{x_2}\left[
\left(\Phi' -\frac{U'\Phi}{U-c}\right)^2 + k^2 \Phi^2
\right]\, dx,
\label{identity}
\end{align}
which is valid for any solution $\Phi$ and subinterval
 $[x_1,x_2]\subseteq[-L,L]$.  This identity follows directly from
 Rayleigh's equation by multiplying by $\Phi$, manipulating, and
 integrating. 
If there are two zeros, say $x_{z1}$ and $x_{z2}$, that satisfy $\Phi(x_{z1},\omega) = \Phi(x_{z2},\omega)=0$,
choosing
 $x_1=x_{z1}$ and $x_2=x_{z2}$ implies
\begin{align}
\int_{x_{z1}}^{x_{z2}}\left[
\left(\Phi' -\frac{U'\Phi}{U-c}\right)^2 + k^2 \Phi^2
\right]\, dx
=0,
\label{identity2}
\end{align}
which requires $\Phi$ to be  the trivial solution $\Phi\equiv 0$.
Thus, any (nontrivial) solution has at most one zero on $[-L,L]$.

(ii) Consider the interval $[x_c,L]$ (the same argument goes for
 $[-L,x_c]$). The identity \eqref{identity} again holds for
 $x_c<x_1<x_2\le L$ and, hence, there is at most one zero on
 $(x_c,L]$. In the neighborhood of $x_c$, the solution $\Phi$ is
 expressed by a linear combination of the Frobenius series solutions, in
 which 
$\Phi(x_c,\omega)$ is always bounded [see \eqref{eq_bnded} below]. If $\Phi(x_c,\omega)\ne0$, the Lemma is automatically
 proven [although both sides of \eqref{identity} go to infinity as
 $x_1\rightarrow x_c$]. If $\Phi(x_c,\omega)=0$, the identity
 \eqref{identity} is still valid for $x_1=x_c$ and we can prove that
 there is no other  zero ($x_{z2}$) except for $x_{z1}=x_c$ on $[x_c,L]$, in the same
 manner as (i).

\end{proof}

It is well-known from Tollmien's argument~\cite{Tollmien,Lin,Lin2} that,
as $k$ varies, an unstable eigenvalue ${\rm Im}\,c>0$ of Rayleigh's equation emerges through a
neutrally stable eigenvalue $c=U(x_I)\in\Bod{R}$ where $U''(x_I)=0$. In
other words, 
the neutrally stable eigenvalue $c\in\Bod{R}$ can exist only if $U''(x_c)=0$ at the corresponding critical layer $x_c=U^{-1}(c)$.
Unfortunately, this argument is not always true especially for nonmonotonic profiles
of $U(x)$ (see Ref.~\cite{ZLin,ZLin2,Balmforth} for mathematical justification in the shear flow context and \cite{Hagstrom1,Hagstrom2} for a discussion of $k\neq 0$ bifurcations in the Vlasov context).
In this work, we simply assume the monotonicity (A2) and
verify Tollmien's argument as follows.

\begin{proposition}\label{th:discrete}
 Let $U(x)$ satisfy (A1) and (A2). Denote the inflection points of $U$
 by $x_{In}$, $n=1,2,\dots,N$ and define $U_{In}:=U(x_{In})$. Then,
the function $W(\omega\pm i0)$ of $\omega\in\Bod{R}$ can vanish only at
 $\omega=kU_{In}$, $n=1,2,\dots,N$, and moreover
\begin{align}
 \lim_{\Omega\rightarrow\omega\pm i0}\frac{\prod_{n=1}^N(\Omega-kU_{In})}{W(\Omega)}<\infty.\label{simple}
\end{align}
\end{proposition}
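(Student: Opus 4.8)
The plan is to prove the two assertions in turn, using the energy identity in the spirit of \eqref{identity}, Lemma~\ref{lem:Sturm}, and the Plemelj (Landau) prescription for the boundary values $W(\omega\pm i0)$. Throughout I write $c=\Omega/k$ and, for real $\omega$ in the range of $kU$, I let $x_c=U^{-1}(\omega/k)$ denote the unique critical layer, which is well defined by the monotonicity~(A2). First I would show that $W(\omega\pm i0)$ can vanish only at inflection-point velocities. If $W(\omega+i0)=0$, then the boundary-value solution $\Phi:=\Phi_<(\cdot,\omega+i0)$ is a nontrivial solution of $\Ope{E}(\omega+i0)\Phi=0$ obeying $\Phi(\pm L)=0$. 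Multiplying $\Ope{E}(\omega+i0)\Phi=0$ by $\overline{\Phi}$ and integrating by parts gives
\begin{align}
\int_{-L}^{L}\frac{U''}{U-c}\,|\Phi|^2\,dx=-\int_{-L}^{L}\big(|\Phi'|^2+k^2|\Phi|^2\big)\,dx,
\end{align}
whose right-hand side is real. Taking the imaginary part and using $\mathrm{Im}\,(U-\omega/k-i0)^{-1}=\pi\,\delta(U-\omega/k)$, the only contribution comes from the critical layer and equals $\pi\,U''(x_c)|\Phi(x_c)|^2/|U'(x_c)|$, which must therefore vanish. Since $U'(x_c)\ne0$ by~(A2), either $U''(x_c)=0$ or $\Phi(x_c)=0$. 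The latter is impossible for an interior $x_c$: by Lemma~\ref{lem:Sturm}(ii) the solution has at most one zero on $[-L,x_c]$, but $\Phi(-L)=0$ already, so $\Phi(x_c)\ne0$. Hence $U''(x_c)=0$, i.e.\ $\omega=kU_{In}$. (The case $\omega$ outside the range of $kU$ is excluded directly by \eqref{identity} on $[-L,L]$, and the endpoint values $\omega=kU(\pm L)$ are treated as limits.)

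Next I would establish that each such zero is simple, which is exactly what makes the ratio in \eqref{simple} finite. Differentiating $\Ope{E}(\Omega)\Phi_<=0$ in $\Omega$ and applying the Lagrange identity between $\Phi_<$ and $\partial_\Omega\Phi_<$ over $[-L,L]$ yields, at any zero $\Omega_*$ of $W$,
\begin{align}
W'(\Omega_*)\,\Phi_<'(L,\Omega_*)=-\int_{-L}^{L}\frac{kU''}{(\Omega_*-kU)^2}\,\Phi_<^2\,dx.
\end{align}
Here $\Phi_<'(L,\Omega_*)\ne0$, since $\Phi_<(L,\Omega_*)=0$ together with $\Phi_<'(L,\Omega_*)=0$ would force $\Phi_<\equiv0$. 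It remains to show the integral is nonzero at $\Omega_*=kU_{In}\pm i0$. The crucial observation is that at such $\Omega_*$ the critical layer coincides with the inflection point $x_{In}$, where $U''$ vanishes linearly with $U'''(x_{In})\ne0$ by~(A2); this cancels one power of the double pole, so the integrand has only a simple pole at $x_{In}$, and in fact $\Ope{E}(kU_{In})$ is regular there, making $\Phi_<(\cdot,kU_{In})$ a real analytic solution. Applying the Plemelj formula to the resulting simple pole, the imaginary part of the integral equals $\pm\pi\,\Phi_<(x_{In})^2\,U'''(x_{In})/\big(k\,U'(x_{In})^2\big)$, which is nonzero because $\Phi_<(x_{In})\ne0$ (again by Lemma~\ref{lem:Sturm}(ii) with $\Phi_<(-L)=0$) and $U'''(x_{In})\ne0$. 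Hence $W'(\Omega_*)\ne0$ and the zero is simple.

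Finally, combining the two parts, $\prod_{n}(\Omega-kU_{In})/W(\Omega)$ has a removable singularity at each $\Omega=kU_{In}$ (a simple zero of $W$ matched by a simple zero of the numerator, the $kU_{In}$ being distinct by monotonicity) and is finite at every other real $\omega$, since there $W(\omega\pm i0)\ne0$; this gives \eqref{simple}. I expect the main obstacle to be the rigorous treatment of the boundary values near $\Omega=kU_{In}$: one must justify the Plemelj/Landau limits for both the energy integral and the $\partial_\Omega W$ integral through the singular critical layer, control the Frobenius expansion so that $\Phi_<(x_c)$ and $\Phi_<'(L)$ are genuine finite boundary values, and confirm that $\Phi_<(x_{In})^2$ is real (which hinges on $\Ope{E}(kU_{In})$ being nonsingular at the inflection point). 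The endpoint critical layers $\omega=kU(\pm L)$, and the possibility that a zero $x_c$ falls on a boundary, need separate though routine attention.
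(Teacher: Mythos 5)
Your proof is correct in substance and takes a genuinely different route from the paper's. For the location of the zeros, the paper works directly with the Frobenius (Tollmien) expansion of $\Phi_<$ about the critical layer: on $[-L,x_c]$ the boundary value $\Phi_<(\cdot,\omega\pm i0)$ is real and, by Lemma~\ref{lem:Sturm}, vanishes only at $-L$, so $\Phi_<(x_c)\neq0$; for $\omega\neq kU_{In}$ its imaginary part on $[x_c,L]$ is $\mp\pi\,\Phi_<(x_c)\,[U''(x_c)/U'(x_c)]$ times the regular real Frobenius solution vanishing at $x_c$, which by Lemma~\ref{lem:Sturm} has no further zero, whence ${\rm Im}\,W(\omega\pm i0)\neq0$. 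You instead run the classical energy-integral/Plemelj argument, which reaches the same conclusion and invokes Lemma~\ref{lem:Sturm} for the same purpose (ruling out $\Phi(x_c)=0$); the two are essentially equivalent in strength, though the paper's version also produces the explicit boundary-value structure it needs later for Proposition~\ref{th:continuous} and the signature calculation. For the simplicity of the zero at $kU_{In}$, the paper is terse: it notes that $\Phi_<(\cdot,kU_{In})$ is real and regular with at most a simple zero in $x$ at $x=L$ and asserts that the zero of $W$ ``must be simple,'' the implicit justification being that ${\rm Im}\,W(\omega\pm i0)\propto U''(x_c)/U'(x_c)\propto(\omega-kU_{In})$ with nonzero slope by (A2). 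Your Lagrange-identity formula for $W'(\Omega_*)$, with the Plemelj evaluation of $\int kU''\Phi_<^2(\Omega_*-kU)^{-2}dx$ in which $U''(x_{In})=0$ and $U'''(x_{In})\neq0$ reduce the double pole to a simple one, supplies exactly this missing quantitative step and is, if anything, more explicit than the paper. The caveats you flag (justifying the distributional limits through the critical layer, reality of $\Phi_<(\cdot,kU_{In})$, endpoint critical layers) are genuine but routine, and the paper glosses over them as well.
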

 
\begin{proof}
For $\omega\in\Bod{R}\backslash\sigma_c$, there is no critical layer
and, from Lemma~\ref{lem:Sturm}, 
the solution $\Phi_<(x,\omega)$ does not have zero on $[-L,L]$ except for $x=-L$.
 Hence, 
$W(\omega)=\Phi_<(L,\omega)\ne0$.

For $\omega\in\sigma_c$, there is only one critical layer
 $x_c=U^{-1}(\omega/k)$.
Since $U(x)$ is analytic, $\Phi_<(x,\Omega)$ can be expressed by
 a linear combination of the Frobenius series solutions (so-called Tollmien's
 inviscid solutions) around $x_c$.
By taking account of the branch cut of the logarithmic function, it is written  in the
 limit $\Omega\rightarrow\omega\pm i0$ as
\begin{align}
 \Phi_<(x,\omega\pm i0)=&C_r(\omega)\Phi_1(x,\omega)\nonumber\\
& +C_s(\omega)\left\{\Phi_2(x,\omega)
+\frac{U''(x_c)}{U'(x_c)}\Phi_1(x,\omega)
\left[\log|x-x_c|\mp\pi iY(x-x_c)\right]
\right\},
\end{align}
where $\Phi_1(x,\omega)$ and $\Phi_2(x,\omega)$ are real and regular
 functions with $\Phi_1(x_c,\omega)=\Phi_2'(x_c,\omega)=0$ and
 $\Phi_1'(x_c,\omega)=\Phi_2(x_c,\omega)=1$ (see Ref.~\cite{Drazin}).
From the definition \eqref{Phi_<}, the coefficients $C_r(\omega)$ and $C_s(\omega)$ are
 found to be real and so is
 $\Phi_<(x,\omega\pm i0)$ on $[-L,x_c]$ since $Y(x-x_c)\equiv0$ for $x<x_c$.
Lemma~\ref{lem:Sturm} shows that $\Phi_<(x,\omega\pm i0)$ has no zero on
 $[-L,x_c]$ other than $x=-L$ and hence $C_s(\omega)>0$ [since
 $\Phi_1(x_c,\omega)=0$].

If $\omega\ne kU_{In}$ (i.e., $x_c\ne x_{In}$), 
 then $U''(x_c)\ne0$ holds and $\Phi_<(x,\omega\pm
 i0)$ possesses the imaginary part on
 $[x_c,L]$. Lemma~\ref{lem:Sturm} again shows that this imaginary part
 has no zero on $[x_c,L]$ other than $x=x_c$.
 Therefore, we
 conclude that
 ${\rm Im}\,W(\omega\pm i0)={\rm Im}\,\Phi_<(L,\omega\pm i0)\ne0$ for
 all $\omega\in\Bod{R}\backslash\{kU_{In}|n=1,2,\dots,N\}$. 

If  $\omega=kU_{In}$, then  $\Phi_<(x,kU_{In})$ is a real and regular
 function on the whole domain $[-L,L]$ due to $U''(x_c)=0$ and has at most one zero on $[x_c,L]$.
 Therefore, $W(kU_{In})=0$ may occur only when
 this zero corresponds to $x=L$, and as such the zero must
 be simple, namely, \eqref{simple} holds.
\end{proof}

Besides the singularity stemming from the zeros of $W(\Omega)$, $\Phi(x,\Omega)$
has also the following essential singularity along the continuous spectrum $\sigma_c$.

\begin{proposition}\label{th:continuous}
 Let $U(x)$ satisfy (A1) and (A2). For all $\xi_0\in X+iX$ and
 $\omega\in\Bod{R}$,
\begin{align}
 \Psi(x,\omega\pm i0)\in\Spa{H}^1+i\Spa{H}^1,
\end{align}
where
\begin{align}
 \Psi(x,\Omega):=\Phi(x,\Omega)\prod_{n=1}^N(\Omega/k-U_{In}).
\end{align}
\end{proposition}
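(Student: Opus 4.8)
The plan is to isolate and control the two distinct singular behaviors that $\Phi(x,\Omega)$ can develop as $\Omega$ approaches the real axis: the \emph{discrete} singularities coming from the zeros of the Wronskian $W(\Omega)$, and the \emph{continuous} critical-layer singularity in the $x$ variable along $\sigma_c$. The multiplicative factor $\prod_{n=1}^N(\Omega/k-U_{In})$ in the definition of $\Psi$ is tailored to the former: by Proposition~\ref{th:discrete} the quotient $\prod_n(\Omega-kU_{In})/W(\Omega)$, and hence $\prod_n(\Omega/k-U_{In})/W(\Omega)$, stays bounded as $\Omega\to\omega\pm i0$. Writing $\Psi(x,\Omega)=\bigl[\prod_n(\Omega/k-U_{In})/W(\Omega)\bigr]\int_{-L}^L\Phi_G(x,s,\Omega)\,\tfrac1k[\xi_0''-k^2\xi_0](s)\,ds$, I would first argue that $\Psi$ has a finite limit and is bounded in $x$. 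For $\omega\in\Bod{R}\backslash\sigma_c$ there is no critical layer, $\Phi$ is regular, and $W(\omega)\ne0$, so this is immediate; the substantive case is $\omega\in\sigma_c$, where near the critical layer $x_c=U^{-1}(\omega/k)$ the fundamental solutions $\Phi_<,\Phi_>$ take the Frobenius form used in Proposition~\ref{th:discrete}, namely a regular part plus a term $\propto\Phi_1(x,\omega)[\log|x-x_c|\mp\pi iY(x-x_c)]$ with $\Phi_1(x_c,\omega)=0$. Because $\Phi_1$ vanishes at $x_c$, both $\Phi_<$ and $\Phi_>$ stay bounded there, and the Green's-function integral against the $\Spa{L}^2$ source $f=\tfrac1k(\xi_0''-k^2\xi_0)$ is finite (using $\log|\cdot-x_c|\in\Spa{L}^2$ and Cauchy--Schwarz), so that $\Psi\in\Spa{L}^2+i\Spa{L}^2$.

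The key step is then to bound $\Psi'$ in $\Spa{L}^2$ directly from the differential equation. Multiplying \eqref{eq_Phi} by the ($x$-independent) polynomial factor shows that $\Psi$ satisfies $\Ope{E}(\Omega)\Psi=\prod_n(\Omega/k-U_{In})\,\tfrac1k(\xi_0''-k^2\xi_0)$, i.e.
\begin{align}
\Psi''=k^2\Psi-\frac{kU''}{\Omega-kU}\Psi-\prod_{n=1}^N(\Omega/k-U_{In})\,\tfrac1k(\xi_0''-k^2\xi_0).
\label{eq_Psi_plan}
\end{align}
Integrating \eqref{eq_Psi_plan} once from a reference point, keeping $\Omega$ off the real axis (where everything is smooth), I would estimate the only dangerous term, $\int^x kU''(\Omega-kU)^{-1}\Psi\,dx'$, uniformly in $\Omega$ near $\omega\pm i0$. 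Using $\omega-kU(x')=-kU'(x_c)(x'-x_c)+O((x'-x_c)^2)$ together with the boundedness of $\Psi$ established above, the Plemelj decomposition $(\omega-kU\pm i0)^{-1}=\mathrm{P}(\omega-kU)^{-1}\mp i\pi\delta(\omega-kU)$ shows that this integral grows at worst like $\log|x-x_c|$ (from the principal-value part) plus a bounded jump at $x_c$ (from the $\delta$ part). Since $\log|x-x_c|\in\Spa{L}^2$ and a bounded jump is harmless, $\Psi'$ admits a uniform $\Spa{L}^2$ bound near $x_c$; away from $x_c$ and near the walls the right-hand side of \eqref{eq_Psi_plan} is locally integrable (indeed $f\in\Spa{L}^2$ and the coefficients are smooth there), so $\Psi'$ is absolutely continuous there. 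Passing to the limit $\Omega\to\omega\pm i0$ with this uniform bound then places $\Psi'$ in $\Spa{L}^2$, whence $\Psi\in\Spa{H}^1$, and applying the same argument to the real and imaginary parts gives $\Psi\in\Spa{H}^1+i\Spa{H}^1$.

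Finally I would dispose of the degenerate configurations. When $\omega=kU_{In}$ the critical layer coincides with an inflection point, $U''(x_c)=0$, so the singular coefficient in \eqref{eq_Psi_plan} is actually regular and the logarithmic term disappears entirely; the only potential blow-up is the simple zero of $W$, which is exactly cancelled by the factor $(\Omega/k-U_{In})$ in $\Psi$, so that $\Psi$ is in fact smooth across $x_c$. When the critical layer sits at a wall ($\omega=kU(\pm L)$), monotonicity (A2) guarantees it is unique and the same Frobenius/Plemelj estimate applies on the one-sided neighborhood. I expect the critical-layer step to be the main obstacle: one must take the $\pm i0$ limit \emph{before} differentiating and exploit the exact $\Phi_1(x_c,\omega)=0$ weighting (equivalently, the boundedness of $\Psi$ at $x_c$) so that only an $\Spa{L}^2$ logarithm, rather than a non-integrable $(x-x_c)^{-1}$, survives in $\Psi'$.
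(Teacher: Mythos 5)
Your proposal is correct and rests on the same two pillars as the paper's proof: the factor $\prod_n(\Omega/k-U_{In})$ neutralizes the (at worst simple) zeros of $W(\Omega)$ via \eqref{simple} from Proposition~\ref{th:discrete}, and the critical-layer singularity contributes only a square-integrable logarithm to the derivative. Where you differ is in the mechanism for the derivative estimate. The paper differentiates the Green's-function kernel directly: $\partial\Phi_G/\partial x(x,s,\omega\pm i0)$ has only a $\log|x-x_c|$ singularity and Heaviside jumps, hence \eqref{eq_bnded}, so the convolution of this kernel with the $\Spa{L}^2$ source $\tfrac1k(\xi_0''-k^2\xi_0)$ gives $W(\Omega)\Phi'\in\Spa{L}^2+i\Spa{L}^2$ in the limit, and \eqref{simple} finishes the job. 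You instead first establish boundedness of $\Psi$ (using $\Phi_1(x_c,\omega)=0$ to tame the $\Phi_1\log|x-x_c|$ term and Cauchy--Schwarz against the $\Spa{L}^2$ source) and then integrate the ODE $\Ope{E}(\Omega)\Psi=p(\Omega)\tfrac1k(\xi_0''-k^2\xi_0)$ once, controlling $\int^x kU''(\Omega-kU)^{-1}\Psi\,dx'$ by Plemelj. Your route costs an extra preliminary step (the boundedness of $\Psi$, which the paper never needs to state) but buys something too: it avoids estimating the kernel derivative as an integral operator on $\Spa{L}^2$ --- the paper's one-sided bound \eqref{eq_bnded} strictly needs a Schur-type supplement (integrability in $s$ uniformly in $x$ as well) to conclude $\Spa{L}^2$-boundedness of the convolution, a point the paper glosses over and your argument sidesteps. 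One small caveat: for the principal-value part of your Plemelj estimate to produce only a logarithm you need $U''\Psi$ to be H\"older (not merely bounded) near $x_c$; this does follow from the Frobenius structure you invoke, but it is worth saying explicitly since mere boundedness of $\Psi$ would not suffice there.
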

\begin{proof}
For any fixed $s\in[-L,L]$, the function $\partial\Phi_G/\partial
 x(x,s,\omega\pm i0)$ is regular almost everywhere except that it has a logarithmic singularity,
 $\log|x-x_c|$, and discontinuities, $Y(x-x_c)$ and $Y(x-s)$. Hence, 
\begin{align}
 \int_{-L}^{L}\left|\frac{\partial\Phi_G}{\partial x}(x,s,\omega\pm i0)\right|dx<\infty.
 \label{eq_bnded}
\end{align}
Since $\xi_0''-k^2\xi_0\in\Spa{L}^2+i\Spa{L}^2$, 
 the following convolution integral is also square-integrable,
\begin{align}
 \lim_{\Omega=\omega\pm i0}\left[W(\Omega)\Phi'(x,\Omega)\right]=
 \int_{-L}^{L}\frac{\partial\Phi_G}{\partial x}(x,s,\omega\pm i0)\frac{1}{k}[\xi_0''(s)-k^2\xi_0(s)]ds\in\Spa{L}^2+i\Spa{L}^2,
\end{align}
that is,
\begin{align}
 \lim_{\Omega=\omega\pm i0}\left[W(\Omega)\Phi(x,\Omega)\right]\in\Spa{H}^1+i\Spa{H}^1.
\end{align}
In combination with \eqref{simple}, the proposition is proven.
\end{proof}

\subsection{Signature of $Q$}

Now, let us substitute the expression \eqref{Dunford} into the quadratic
form $Q=\langle\overline{\xi},\Ope{H}\xi\rangle$, which is a  constant of motion for the equation \eqref{eq_xi}. 
By using the
property of the resolvent $(\Omega-k\Ope{L}^*)^{-1}$ (see
Ref.~\cite{Hirota,Hirota3}), we obtain
\begin{align}
 Q=&\frac{1}{2\pi i}
\oint_{\Gamma(\sigma)}
h(\Omega)
d\Omega
\end{align}
where $h:\Bod{C}\rightarrow\Bod{C}$ is given by
\begin{align}
 h(\Omega)
=\nu\int_{-L}^{L}\overline{\xi_0}
\left[-\frac{kU''}{\Omega-kU}\Psi(x,\Omega)\right]dx.
\end{align}
Upon decomposing the spectrum $\sigma\subset\Bod{C}$ into
$\sigma_c\subset\Bod{R}$ and others
$\sigma\backslash\sigma_c\subset\Bod{C}\backslash\Bod{R}$ and, 
accordingly,  deforming  the contour $\Gamma(\sigma)$ into
$\Gamma(\sigma_c)$ and $\Gamma(\sigma\backslash\sigma_c)$ (see FIG.~\ref{fig:contours}),  we obtain
$Q=Q|_{\sigma_c}+Q|_{\sigma\backslash\sigma_c}$  with 
\begin{align}
Q|_{\sigma_c}=&\frac{1}{2\pi
 i}\oint_{\Gamma(\sigma_c)}h(\Omega)d\Omega
=\int_{\sigma_c}\hat{h}(\omega)d\omega,
\end{align}
where
\begin{align}
 \hat{h}(\omega)=\frac{1}{2\pi i}[-h(\omega+i0)+h(\omega-i0)].
\end{align}
The existence of this limit for
all $\omega\in\sigma_c$, is guaranteed by
Proposition~\ref{th:continuous} and $\xi_0\in X+iX$.

To observe the signature of the function $\hat{h}(\omega)$ more explicitly,
we rewrite $h(\Omega)$ as
\begin{align}
 h(\Omega)=&\frac{\nu k}{p(\Omega)}\int_{-L}^{L}
\overline{\Psi(x,\overline{\Omega})}\Ope{E}(\Omega)\Psi(x,\Omega)
dx
-\nu \frac{p(\Omega)}{k}\int_{-L}^{L}(|\xi_0'|^2+k^2|\xi_0|^2)
dx,
\end{align}
where we have put $p(\Omega)=\prod_{n=1}^N(\Omega/k-U_{In})$.
Since $p(\Omega)$ is an regular function of $\Omega$,  we may neglect
the second term on the right hand side when calculating
$\hat{h}(\omega)$. As shown in Proposition 10 of Ref.~\cite{Hirota}, the first term is
further transformed into
\begin{align}
k\int_{-L}^{L}
\overline{\Psi(x,\overline{\Omega})}\Ope{E}(\Omega)\Psi(x,\Omega)
dx
=&-k\langle\overline{F(x,\Omega)},\Ope{E}(\Omega)F(x,\Omega)\rangle
-k\langle\overline{G(x,\Omega)},\Ope{E}(\Omega)G(x,\Omega)\rangle\nonumber\\
&+p(\Omega)\langle\overline{G(x,\Omega)},\xi_0''-k^2\xi_0\rangle
+p(\Omega)\langle\overline{\xi_0''-k^2\xi_0},G(x,\Omega)\rangle,
\end{align}
where
\begin{align}
 F(x,\Omega)=\frac{1}{2}[\Psi(x,\Omega)-\Psi(x,\overline{\Omega})],\\
 G(x,\Omega)=\frac{1}{2}[\Psi(x,\Omega)+\Psi(x,\overline{\Omega})].
\end{align}
In the limit of $\Omega\rightarrow\omega\pm i0$, the relations
$F(\omega+i0)=-F(\omega-i0)$ and $G(\omega+i0)=G(\omega-i0)$ hold.
Using the formula,
\begin{align}
 -\Ope{E}(\omega+i0)+\Ope{E}(\omega-i0)=&\frac{kU''(x)}{\omega+i0-kU(x)}-\frac{kU''(x)}{\omega-i0-kU(x)}\nonumber\\
=&-2\pi i\frac{U''(x_c)}{|U'(x_c)|}\delta(x-x_c),
\end{align}
we finally obtain
\begin{align}
 \hat{h}(\omega)=&\frac{\nu kU''(x_c)}{p(\omega)|U'(x_c)|}
\left[|F(x_c,\omega+i0)|^2+|G(x_c,\omega+i0)|^2\right],
\end{align}
where $x_c=U^{-1}(\omega/k)$ should be read as a function of
$\omega$.
According to the definition \eqref{nu} of $\nu$,
this expression indicates that $\hat{h}(\omega)$
is negative for all $\omega\in\sigma_c$ and we conclude that
\begin{align}
 Q|_{\sigma_c}
=\int_{\sigma_c}\hat{h}(\omega)d\omega\le0
\end{align}
for all solutions $\xi\in X+iX$ of \eqref{eq_xi} with initial data
$\xi_0\in X+iX$.

If $Q=\langle\overline{\xi},\Ope{H}\xi\rangle>0$ for some $\xi\in X+iX$, then $\sigma\backslash\sigma_c$ must
not be null and there exists at least one pair of complex
eigenvalues, say $\omega_j$ and $\overline{\omega_j}$ with ${\rm
Im}\,\omega_j>0$, which correspond to growing and damping modes, respectively.
Since $\Ope{H}$ is actually a real self-adjoint operator, the
condition $Q=\langle\xi,\Ope{H}\xi\rangle>0$ for some $\xi\in X$
comes to the same conclusion.

Conversely, if there exist several pairs of complex eigenvalues
$\sigma\backslash\sigma_c=\{\omega_j,\overline{\omega}_j|j=1,2,\dots\}$,
the constant of motion $Q$ must be indefinite in the corresponding
eigenspaces, as shown by \cite{Whittaker,Moser,MacKay}  for a  Hamiltonian function.
Indeed, the solution $\xi$ is subject to the following modal decomposition;
\begin{align}
 \xi=\sum_j(a_j\hat{\xi}_je^{-i\omega_jt}+b_j\overline{\hat{\xi_j}}e^{-i\overline{\omega_j}t})+\dots,
\end{align}
where $\hat{\xi}_j$ is the eigenfunction for $\omega_j$ and
$a_j,b_j\in\Bod{C}$ are the mode amplitudes which depend on $\xi_0$. By
substituting this expression into $Q$
and using the orthogonality relations,
\begin{align}
 \langle\hat{\xi}_l,U''\hat{\xi}_j\rangle=
\langle\overline{\hat{\xi}_l},U''\hat{\xi}_j\rangle=
\langle\overline{\hat{\xi}_j},U''\hat{\xi}_j\rangle=0
\quad(l\ne j),\quad
\langle\hat{\xi}_j,U''\hat{\xi}_j\rangle\ne0,
\end{align}
 we obtain
\begin{align}
 Q|_{\sigma\backslash\sigma_c}=\nu\sum_j\left[a_j\overline{b_j}p(\omega_j)\langle\hat{\xi}_j,U''\hat{\xi}_j\rangle
+{\rm c.c.}\right],
\end{align}
whose sign is clearly indefinite. For example, by setting either
$(a_j,b_j)=(1,1)$ or $(a_j,b_j)=(1,-1)$, we can make
$Q|_{\sigma\backslash\sigma_c}>0$.
Thus, we have proven that the equation \eqref{eq_w} is spectrally stable if and
only if
$Q\le0$
for all $\xi\in X$.

\subsection{Extension of search space}\label{sec:extension}

Our remaining task is to extend the search space from $X$ to $\Spa{L}^2$.
Maximization of $Q=\langle\xi,\Ope{H}\xi\rangle$ on $\Spa{L}^2$ is, in practice, more tractable than
that on $X$, since the variational problem $\lambda_{\rm max}=\max
Q/\|\xi\|^2_{\Spa{L}^2}$ simply searches the maximum eigenvalue
$\lambda_{\rm max}$ of the selfadjoint operator $\Ope{H}$. 
Let us consider the eigenvalue
problem $(\lambda-\Ope{H})\hat{\xi}=0$ and rewrite  $\Ope{H}$ in the form of
\begin{align}
 \Ope{H}=\nu U''\prod_{n=1}^N(U-U_{In})+\Ope{R},
\end{align}
where $\Ope{R}$ represents the sum of all operators that involve at
least one multiplication of $\Ope{G}$ and hence
$\Ope{R}:\Spa{L}^2\rightarrow X$.
It follows from the
condition \eqref{nu} that $\Ope{H}$ has a continuous spectrum for the
negative side, $\min[\nu U''\prod_{n=1}^N(U-U_{In})]
\le\lambda\le0$.
 On the other hand, for $\lambda>0$, the eigenvalue
problem is non-singular and can be rewritten as
\begin{align}
 \hat{\xi}=\frac{1}{\lambda-\nu U''\prod_{n=1}^N(U-U_{In})}\Ope{R}\hat{\xi}.
\end{align}
Since $\Ope{R}\hat{\xi}\in X$, this eigenfunction $\hat{\xi}$ inevitably belongs to $X$. 
If $\Ope{H}$ has such a positive discrete eigenvalue, the
 corresponding eigenfunction $\hat{\xi}\in X$ directly proves the instability $Q=\langle\hat{\xi},\Ope{H}\hat{\xi}\rangle>0$. Conversely, if $Q\le0$ for all $\xi\in\Spa{L}^2$, then obviously
$Q\le0$ for all $\xi\in X\subset\Spa{L}^2$. Therefore,
we may replace the search space $X$
 by $\Spa{L}^2$;  thus, the proof of Theorem~\ref{th:multiple} is completed.

We can further extend this idea as follows: 
\begin{corollary}\label{th:extension}
 The stability condition \eqref{condition} in Theorem~\ref{th:multiple}
can be replaced by
\begin{align}
 Q=\langle w,\Ope{H}_{\rm v}w\rangle\le0
\quad\mbox{for all}\quad w\in\Spa{L}^2,
\end{align}
where $w=-U''\xi$ and, hence,
\begin{align}
 \Ope{H}_{\rm v}=\frac{\nu}{U''}\prod_{n=1}^N\left[U-U_{In}+U''\Ope{G}\right].
\end{align}
\end{corollary}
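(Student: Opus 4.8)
The plan is to read the Corollary as a mere change of representation of the single number $Q$: the two forms $\langle\xi,\Ope{H}\xi\rangle$ and $\langle w,\Ope{H}_{\rm v}w\rangle$ coincide whenever $w=-U''\xi$, so the only genuine issue is that the search spaces differ. Writing $\Ope{A}_n:=(U-U_{In})+U''\Ope{G}$, so that $\Ope{H}\xi=\nu\prod_{n=1}^N\Ope{A}_n(U''\xi)$ and $\Ope{H}_{\rm v}=(\nu/U'')\prod_{n=1}^N\Ope{A}_n$, the substitution $w=-U''\xi$ gives $\prod_n\Ope{A}_n w=-\nu^{-1}\Ope{H}\xi$ and hence $\Ope{H}_{\rm v}w=-(1/U'')\Ope{H}\xi$, so that
\begin{align}
\langle w,\Ope{H}_{\rm v}w\rangle
=\int_{-L}^{L}(-U''\xi)\left(-\frac{1}{U''}\Ope{H}\xi\right)dx
=\langle\xi,\Ope{H}\xi\rangle=Q ,
\end{align}
both sides being finite because the singular prefactor $1/U''$ is cancelled by the explicit $U''$ in $w=-U''\xi$ (note that multiplication by $U''$ and each $\Ope{A}_n$ are bounded on $\Spa{L}^2$, so $\Ope{H}$ is bounded). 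Since $U''$ is bounded by (A1), every $\xi\in\Spa{L}^2$ yields $w=-U''\xi\in\Spa{L}^2$; therefore, if $\langle w,\Ope{H}_{\rm v}w\rangle\le0$ for all $w\in\Spa{L}^2$, then in particular $Q=\langle\xi,\Ope{H}\xi\rangle\le0$ for all $\xi\in\Spa{L}^2$, which is \eqref{condition}. This is the easy direction.

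For the converse I would first show that $\Ope{H}_{\rm v}$ extends to a \emph{bounded} operator on $\Spa{L}^2$, the only obstruction being the zeros of $U''$ at the inflection points. The key point, which uses (A2) so that each $x_{Im}$ is a simple zero of both $U''$ and $U-U_{Im}$, is that $\prod_n\Ope{A}_n w$ carries a compensating factor $(x-x_{Im})$. Factoring globally $U-U_{Im}=(x-x_{Im})a_m$ and $U''=(x-x_{Im})b_m$ with $a_m,b_m$ analytic and $b_m(x_{Im})=U'''(x_{Im})\ne0$, and applying the factors from the inside out, the innermost factors $\Ope{A}_{m+1}\cdots\Ope{A}_N$ produce some $u\in\Spa{L}^2$; then $\Ope{A}_m u=(x-x_{Im})(a_m u+b_m\Ope{G}u)$; and every remaining (outer) factor $\Ope{A}_j$ ($j\ne m$) preserves an explicit factor $(x-x_{Im})$, since its multiplicative part does so trivially while its operator part $U''\Ope{G}$ carries the factor through $U''=(x-x_{Im})b_m$. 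Hence $\prod_n\Ope{A}_n w=(x-x_{Im})g_m$ with $g_m\in\Spa{L}^2$, so that near $x_{Im}$ one has $\Ope{H}_{\rm v}w=\nu g_m/b_m$, which is square integrable since $1/b_m$ is bounded there. Away from the finitely many inflection points $U''$ is bounded below, so collecting the local estimates yields $\|\Ope{H}_{\rm v}w\|_{\Spa{L}^2}\le C\|w\|_{\Spa{L}^2}$.

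With boundedness in hand the converse follows by density. The range $\Ope{D}:=\{-U''\xi\mid\xi\in\Spa{L}^2\}$ is dense in $\Spa{L}^2$, because any $w\in\Spa{L}^2$ is the limit of $w\chi_\epsilon$, where $\chi_\epsilon$ cuts off $\epsilon$-neighborhoods of the $x_{Im}$, and $w\chi_\epsilon=-U''\xi_\epsilon$ with $\xi_\epsilon=-w\chi_\epsilon/U''\in\Spa{L}^2$. Assuming the stability condition \eqref{condition}, the identity above gives $\langle w,\Ope{H}_{\rm v}w\rangle=\langle\xi,\Ope{H}\xi\rangle\le0$ for every $w\in\Ope{D}$, and since $w\mapsto\langle w,\Ope{H}_{\rm v}w\rangle$ is continuous on $\Spa{L}^2$ (a consequence of the boundedness of $\Ope{H}_{\rm v}$), the inequality extends from the dense set $\Ope{D}$ to all $w\in\Spa{L}^2$. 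This establishes the equivalence asserted in the Corollary.

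The main obstacle is precisely the boundedness of $\Ope{H}_{\rm v}$: one must verify that the singular prefactor $1/U''$ is exactly cancelled by a zero of $\prod_n\Ope{A}_n w$ of matching order at each inflection point, and that this cancellation survives the \emph{noncommuting} composition of the factors $\Ope{A}_n$. The mechanism is that the factor $(x-x_{Im})$, once created by $\Ope{A}_m$, is propagated through every subsequent factor by the $U''$ appearing in $\Ope{A}_j=(U-U_{Ij})+U''\Ope{G}$; once this is in place, the identity, the easy direction, and the density argument are all routine.
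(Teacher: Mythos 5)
Your proposal is correct and follows essentially the same route as the paper's proof: the identity $\langle w,\Ope{H}_{\rm v}w\rangle=\langle\xi,\Ope{H}\xi\rangle$ under $w=-U''\xi$, the trivial direction by substitution, and the converse via boundedness of $\Ope{H}_{\rm v}$ together with approximation of an arbitrary $w\in\Spa{L}^2$ by elements $-U''\xi_m$ and continuity of the quadratic form. You merely supply details the paper leaves implicit, namely the cancellation of $1/U''$ at the simple zeros through the noncommuting factors $\Ope{A}_n$ and the explicit cutoff construction showing the isovortical range is dense.
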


\begin{proof}
Since $(\nu/U'')\prod_{n=1}^N(U-U_{In})<0$ follows from the assumptions, the operator $\Ope{H}_{\rm v}$ is found to be bounded; $\exists
 C>0$ such that $\langle w,\Ope{H}_{\rm v}w\rangle<C\|w\|_{\Spa{L}^2}^2$ for all
 $w\in\Spa{L}^2$.
Suppose that we find a function $\hat{w}\in\Spa{L}^2$ that 
 makes $Q$ positive;
\begin{align}
0<\frac{\langle\hat{w},\Ope{H}_{\rm v}\hat{w}\rangle}{\|\hat{w}\|_{\Spa{L}^2}^2}<C.
\end{align}
Then, consider a sequence $\xi_m\in\Spa{L}^2$, $m=1,2,\dots,\infty$, that
 satisfies $\|\hat{w}+U''\xi_m\|_{\Spa{L}^2}\rightarrow0$ as
$m\rightarrow\infty$. 
Since $\langle\xi_m,\Ope{H}\xi_m\rangle\rightarrow\langle\hat{w},\Ope{H}_{\rm v}\hat{w}\rangle$ as $m\rightarrow\infty$,
$\langle\xi_m,\Ope{H}\xi_m\rangle$ also becomes positive when $m$ is sufficiently large.

On the
other hand, if $\langle w,\Ope{H}_{\rm v}w\rangle\le0$ for all $w\in\Spa{L}^2$, then obviously
$\langle\xi,\Ope{H}\xi\rangle=\langle U''\xi,\Ope{H}_{\rm v}U''\xi\rangle\le0$ for all $\xi\in\Spa{L}^2$.
\end{proof}

We will actually adopt Corollary~\ref{th:extension} in the subsequent
sections, because this variational problem for $w\in\Spa{L}^2$ is more
beneficial than that for $\xi\in\Spa{L}^2$,  both analytically and
numerically. This fact is evident from the corresponding eigenvalue problem;
\begin{align}
 (\lambda -\Ope{H}_{\rm v})\hat{w}=0.\label{eigen_multiple}
\end{align}
The operator $\Ope{H}_{\rm v}$, which is  again written as
\begin{align}
 \Ope{H}_{\rm v}=&\frac{\nu}{U''}\prod_{n=1}^N(U-U_{In})+\frac{1}{U''}\Ope{R}\frac{1}{U''},
\end{align}
 has a continuous spectrum, but
 it is remarkable that 
the upper edge of this continuous spectrum,
$\lambda_u=\max[(\nu/U'')\Pi_{n=1}^N(U-U_{In})]$,
 is separated from the origin ($\lambda_u<0$).
Owing to this property, the variational problem for $w\in\Spa{L}^2$ is useful for investigating the stability boundary
at $\lambda=0$ without suffering from any singularity.

\section{Comparison with existing results}\label{sec:comparison}

In this section, we  explore several alternative representations of our variational stability criterion
by assuming that we have somehow {\it solved}  Rayleigh's
equation under  specific conditions.  As a consequence  of this exploration,
we reproduce existing stability theorems and 
gain a clear-cut understanding of the onset of instability.

\subsection{Single inflection point}

First consider the case of a single inflection point with the condition
$(U-U_I)/U''<0$ for all $x\in[-L,L]$, since the opposite case $(U-U_I)/U''>0$ is always stable. According to Corollary~\ref{th:extension}, we maximize $Q$ with
respect to
$w\in\Spa{L}^2$, where the corresponding
eigenvalue problem \eqref{eigen_multiple} is simply
\begin{align}
 \lambda \hat{w}=\frac{U-U_I}{U''}\hat{w}+\Ope{G}\hat{w}.
\end{align}
We are interested in whether a positive eigenvalue $\lambda>0$ exists or
not, for its existence is the necessary and
sufficient condition for instability. 
By focusing on $\lambda>\lambda_u$ where $\lambda_u=-\min[(U_I-U)/U'']<0$,
the eigenvalue problem is transformed into 
\begin{align}
 \hat{\phi}''-k^2\hat{\phi}+\frac{1}{\lambda+(U_I-U)/U''}\hat{\phi}=0,\label{eigen}\\
 \hat{\phi}(-L)=\hat{\phi}(L)=0,\label{boundary}
\end{align}
using $\hat{\phi}=\Ope{G}\hat{w}$.
Since this is a Sturm-Liouville problem, the general solution
$\hat{\phi}$ becomes less oscillatory everywhere on $[-L,L]$ as  the 
two parameters $\lambda>\lambda_u$ and $k>0$ increase.
It follows that the eigenvalue $\lambda$ is bounded by
\begin{align}
\lambda< \frac{1}{k^2}+\lambda_u.
\end{align}
If $k^2>-\lambda_u^{-1}$, no positive eigenvalue $\lambda>0$ exists and,
hence, the flow $U$ is stable for
such large $k$.

Since  marginal stability occurs at $\lambda=0$ in \eqref{eigen},
we  analyze  the equation,
\begin{align}
\Ope{E}_I\hat{\phi}:=-\hat{\phi}''+k^2\hat{\phi}-\frac{U''}{U_I-U}\hat{\phi}=0.
\end{align}
If this solution is somehow available, we obtain the following stability criterion.
\begin{corollary}
 If $U(x)$ satisfies (A1) and (A2) and has a single inflection point
 $x_I$,  and  $\phi_c(x)$ denotes the  solution of
\begin{align}
\Ope{E}_I\phi_c=0,\quad \phi_c(-L)=0,\quad\phi_c'(-L)=1, 
\end{align}
then  \eqref{eq_w} is spectrally stable if and only if
$\phi_c(L)\ge0$.
\end{corollary}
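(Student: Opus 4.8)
The plan is to reduce the assertion, via Corollary~\ref{th:extension}, to a statement about the sign of the largest eigenvalue of the Sturm--Liouville problem \eqref{eigen}--\eqref{boundary}, and then to read off that sign from a single zero-counting datum, namely the sign of $\phi_c(L)$. For a single inflection point, spectral stability is equivalent to the absence of a positive eigenvalue $\lambda>0$ of \eqref{eigen}. First I would introduce the shooting solution $\phi(x;\lambda)$, defined as the solution of \eqref{eigen} with $\phi(-L;\lambda)=0$ and $\phi'(-L;\lambda)=1$, so that $\phi(\cdot\,;0)=\phi_c$ and the eigenvalues are precisely the $\lambda>\lambda_u$ at which $\phi(L;\lambda)=0$.

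The next step is a Sturm oscillation argument organized through the Pr\"ufer angle $\theta(x;\lambda)$, defined by $\phi=r\sin\theta$, $\phi'=r\cos\theta$ with $\theta(-L;\lambda)=0$ and $r>0$. Writing \eqref{eigen} as $\phi''+q(x;\lambda)\phi=0$ with $q(x;\lambda)=[\lambda+(U_I-U)/U'']^{-1}-k^2$, the angle obeys $\theta'=\cos^2\theta+q\sin^2\theta$, and since $\partial_\lambda q<0$ on $\lambda>\lambda_u$ the standard comparison theorem shows $\theta(L;\lambda)$ is \emph{strictly decreasing} in $\lambda$. Interior zeros of $\phi(\cdot\,;\lambda)$ correspond to $\theta(L;\lambda)$ crossing positive multiples of $\pi$, and the largest eigenvalue satisfies $\theta(L;\lambda_1)=\pi$. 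For $\lambda\to\infty$ the coefficient $[\lambda+(U_I-U)/U'']^{-1}\to0$, so $\phi$ tends to the nonoscillatory $k^{-1}\sinh k(x+L)$, giving $\theta(L;\lambda)\in(0,\pi)$ and $\phi(L;\lambda)>0$; this is the large-$k$ regime already noted above.

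The crucial input is an upper bound on $\theta(L;0)$. Because $x_I$ is an inflection point, $U''(x_I)=0$, so the apparent singularity of $U''/(U_I-U)$ at the critical layer $x_c=x_I$ is removable, and $\Ope{E}_I\phi_c=0$ is a regular Sturm--Liouville equation on all of $[-L,L]$. I would then invoke Lemma~\ref{lem:Sturm} exactly as in the proof of Proposition~\ref{th:discrete}: on $[-L,x_I]$ the solution $\phi_c$ (which already vanishes at $x=-L$) has no further zero, and on $[x_I,L]$ it has at most one zero. Hence $\phi_c$ has at most one zero on $(-L,L]$, which forces $\theta(L;0)<2\pi$.

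Combining these, for every $\lambda\ge0$ we have $\theta(L;\lambda)\le\theta(L;0)<2\pi$, while $\theta(L;\lambda)\in(0,\pi)$ for large $\lambda$; since $\theta(L;\cdot)$ is continuous and strictly decreasing, there exists $\lambda\ge0$ with $\theta(L;\lambda)=\pi$ (equivalently $\phi(L;\lambda)=0$) if and only if $\theta(L;0)\ge\pi$, and such a $\lambda$ is strictly positive iff $\theta(L;0)>\pi$. As $\theta(L;0)<2\pi$, the condition $\theta(L;0)>\pi$ means $\theta(L;0)\in(\pi,2\pi)$, i.e.\ $\sin\theta(L;0)<0$, i.e.\ $\phi_c(L)<0$. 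Thus a positive eigenvalue of \eqref{eigen} exists iff $\phi_c(L)<0$, and since instability is equivalent to the existence of such an eigenvalue, the flow is spectrally stable iff $\phi_c(L)\ge0$. I expect the main obstacle to be precisely this zero-counting control: without ruling out $\theta(L;0)\ge2\pi$ (two positive eigenvalues) the sign of $\phi_c(L)$ would not determine stability, and it is the inflection-point regularity together with Lemma~\ref{lem:Sturm} that pins $\theta(L;0)<2\pi$ and closes the argument.
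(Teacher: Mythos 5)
Your proposal is correct and follows essentially the same route as the paper: both rest on Lemma~\ref{lem:Sturm} to pin down the zero count of $\phi_c$ (no interior zero on $[-L,x_I]$, at most one on $[x_I,L]$) and on Sturm comparison in $\lambda$ for the shooting problem \eqref{eigen}--\eqref{boundary}, together with nonoscillation at large $\lambda$. Your Pr\"ufer-angle bookkeeping is simply a rigorous formalization of the paper's informal ``$\hat{\phi}$ becomes less oscillatory as $\lambda$ increases'' step, and correctly identifies the bound $\theta(L;0)<2\pi$ as the point where the zero-counting lemma is indispensable.
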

\begin{proof}
According to Lemma~\ref{lem:Sturm}, 
$\phi_c$ does not have zero on $[-L,x_I]$ other than $x=-L$ and has at most one zero
 on $[x_I,L]$.
Note that, by increasing $\lambda$ from $0$, the general solution $\hat{\phi}$ of \eqref{eigen} becomes less
 oscillatory than $\phi_c$. 
 If $\phi_c(L)<0$, $\phi_c(x)$ has one zero on $[x_I,L]$ and hence there
 must be one eigenvalue $\lambda\in[0,1/k^2+\lambda_u]$ for which $\hat{\phi}$ satisfies \eqref{eigen} and \eqref{boundary}. 

Conversely, if $\phi_c(L)\ge0$, then $\phi_c(x)$ does not have zero on
 $-L<x<L$ and the solution $\hat{\phi}$ of \eqref{eigen} cannot satisfy
the boundary condition \eqref{boundary} when $\lambda>0$, i.e.,
there is no positive eigenvalue $\lambda>0$.
\end{proof}

In particular,  
we can obtain an analytical solution $\phi_c$ for the case of
$k\rightarrow0$ as
\begin{align}
\phi_c(x)
=&[U(-L)-U_I][U(x)-U_I]\int_{-L}^x\frac{1}{[U(s)-U_I]^2}ds.
\end{align}
Then, the necessary and sufficient stability condition $\phi_c(L)\ge0$
becomes
\begin{align}
\left.\frac{1}{U'(s)[U(s)-U_I]}\right|_{-L}^{L}
+\int_{-L}^{L}\frac{U''(s)}{U'^2(s)[U(s)-U_I]}ds\ge0,
\end{align}
which agrees with the result of Rosenbluth \& Simon~\cite{Rosenbluth}. (Note, the  typographical error in the final
equation (4) of this paper, in which $w'^3$ should be replaced by $w'^2$).

Another equivalent approach is to solve the equation $\Ope{E}_I\phi_c=0$ with
boundary conditions $\phi_c(-L)=\phi_c(L)=0$ and with a 
{\it derivative jump} at $x=x_I$, 
\begin{align}
 \phi_c(x_I+0)=\phi_c(x_I-0),\quad \alpha:=\phi_c'(x_I+0)-\phi_c'(x_I-0).
\end{align}
In other words, we solve $\Ope{E}_I\phi_c=-\alpha\delta(x-x_I)$ or
\begin{align}
 -\phi_c+\Ope{G}\left(\frac{U''}{U_I-U}\phi_c\right)=\alpha g(x,x_I).
\end{align}
By introducing a normalization
$\int_{-L}^{L}(-\phi_c''+k^2\phi_c)dx=1$ for $\phi_c$, we can determine
$\alpha$ as
\begin{align}
\alpha=-1+\int_{-L}^{L}\frac{U''}{U_I-U}\phi_c \, dx,
\end{align}
and arrive at the integral equation \eqref{BM} shown below.
This approach reproduces the  stability criterion obtained  by
Balmforth \& Morrison~\cite{Balmforth}: 

\begin{corollary}\label{th:BM}
 If $U(x)$ satisfies (A1) and (A2) and has a single inflection point
 $x_I$, and $\phi_c(x)$ denotes  the solution of
\begin{align}
-\phi_c(x)+g(x,x_I)+\int_{-L}^{L}[g(x,s)-g(x,x_I)]\frac{U''(s)}{U_I-U(s)}\phi_c(s)ds=0,
\label{BM}
\end{align}
then \eqref{eq_w} is spectrally stable if and only if
\begin{align}
 -1+\int_{-L}^{L}\frac{U''}{U_I-U}\phi_cdx<0,
\end{align}
\end{corollary}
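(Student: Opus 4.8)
The plan is to recognize that \eqref{BM} is nothing but the one-sided solution problem for $\Ope{E}_I$ recast as a jump condition, and to pin down the sign of that jump. As established just above \eqref{BM}, solving \eqref{BM} is equivalent to solving $\Ope{E}_I\phi_c=-\alpha\delta(x-x_I)$ with $\phi_c(\pm L)=0$ under the normalization $\int_{-L}^{L}(-\phi_c''+k^2\phi_c)\,dx=1$, which forces $\alpha=-1+\int_{-L}^{L}[U''/(U_I-U)]\phi_c\,dx$. Hence the stated criterion $-1+\int_{-L}^{L}[U''/(U_I-U)]\phi_c\,dx<0$ is exactly $\alpha<0$, and it suffices to determine $\mathrm{sign}\,\alpha$ and relate it to the endpoint value $\phi_<(L)$ of the one-sided solution that already governs stability in the preceding corollary.

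First I would construct $\phi_c$ by gluing two one-sided solutions across $x_I$. Since (A2) makes the coefficient $U''/(U_I-U)$ bounded at the inflection point (it tends to $-U'''(x_I)/U'(x_I)$ there), the operator $\Ope{E}_I$ is regular on all of $[-L,L]$. Therefore $\phi_c=c_1\phi_<$ on $[-L,x_I]$ and $\phi_c=c_2\phi_>$ on $[x_I,L]$, where $\phi_<,\phi_>$ solve $\Ope{E}_I\phi=0$ with $\phi_<(-L)=0,\ \phi_<'(-L)=1$ and $\phi_>(L)=0,\ \phi_>'(L)=-1$; here $\phi_<$ is precisely the $\phi_c$ of the preceding corollary, whose criterion was $\phi_c(L)\ge0$. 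Lemma~\ref{lem:Sturm} gives $\phi_<>0$ on $(-L,x_I]$ and $\phi_>>0$ on $[x_I,L)$, so continuity at $x_I$ can be arranged with $c_1,c_2>0$, making $\phi_c>0$ on $(-L,L)$; the normalization integral is then manifestly positive and merely fixes a positive overall scale.

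Next I would read off the jump from the Wronskian $W=\phi_<\phi_>'-\phi_<'\phi_>$. Because $\Ope{E}_I$ is regular across $x_I$, $W$ is constant on $[-L,L]$, and evaluating at $x=L$ gives $W=-\phi_<(L)$. Taking $c_1=1$ and $c_2=\phi_<(x_I)/\phi_>(x_I)$, the derivative jump is $\alpha=c_2\phi_>'(x_I)-\phi_<'(x_I)=W(x_I)/\phi_>(x_I)=-\phi_<(L)/\phi_>(x_I)$. Since the normalization only multiplies $\phi_c$ by a positive constant and $\phi_>(x_I)>0$, this yields the sign identity $\mathrm{sign}\,\alpha=-\mathrm{sign}\,\phi_<(L)$.

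Finally, the preceding corollary asserts that \eqref{eq_w} is spectrally stable iff $\phi_<(L)\ge0$, which by the sign identity is $\alpha\le0$; the strict criterion $\alpha<0$ thus cuts out the open stable region, with $\alpha=0$ (equivalently $\phi_<(L)=0$) the marginal case in which $\phi_<$ itself is a neutral mode at $c=U_I$. The hard part is the sign bookkeeping: one must check that the normalization constant is genuinely positive, so that it cannot flip $\mathrm{sign}\,\alpha$, and—more essentially—that $\Ope{E}_I$ has no singularity at $x_I$, which is exactly where the simple-zero hypothesis in (A2) is indispensable, since it keeps $U''/(U_I-U)$ finite and hence guarantees that $W$ is truly constant across the inflection point.
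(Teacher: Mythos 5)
Your proposal is correct, but it proves the corollary by a genuinely different route than the paper. The paper's proof stays inside the variational framework: it argues that if $\alpha>0$ the derivative jump can be closed by increasing $\lambda$ from $0$ in the Sturm--Liouville problem \eqref{eigen} (solutions become less oscillatory as $\lambda$ grows), so a positive eigenvalue $\lambda\in[0,1/k^2+\lambda_u]$ of $\Ope{H}_{\rm v}$ exists and Theorem~\ref{th:multiple} gives instability, while if $\alpha\le0$ the jump only widens and no positive eigenvalue exists. You instead bypass the $\lambda$-continuation entirely and reduce the statement to the preceding (Rosenbluth--Simon) corollary: gluing $\phi_<$ and $\phi_>$ across the regular point $x_I$ (regular precisely because (A2) makes $U''/(U_I-U)$ analytic there) and using constancy of the Wronskian to get $\alpha=-\phi_<(L)/\phi_>(x_I)$, with $\phi_>(x_I)>0$ from Lemma~\ref{lem:Sturm} and a positive normalization constant since $\int(-\phi_c''+k^2\phi_c)\,dx=\phi_c'(-L)-\phi_c'(L)+k^2\int\phi_c\,dx>0$. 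Your sign bookkeeping checks out, and your treatment of the marginal case ($\alpha=0$ being stable, so the strict inequality in the statement is off by the boundary point) matches what the paper's own proof actually establishes. What each approach buys: yours is more elementary and makes the equivalence of the two corollaries explicit, but it depends on Corollary 7 already being in hand and says nothing directly about the eigenvalue $\lambda$ of $\Ope{H}_{\rm v}$; the paper's argument is self-contained relative to \eqref{eigen}, ties $\mathrm{sign}\,\alpha$ directly to the birth of a positive eigenvalue of the selfadjoint operator (the quantity computed numerically in Sec.~\ref{sec:numerical}), and is the version that extends to the multiple-inflection-point discussion that follows.
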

\begin{proof}
 According to Lemma~\ref{lem:Sturm}, $\phi_c(x)$ does not have zero on
$-L<x<L$ and its sign should be always positive $\phi_c(x)>0$ due to the
 normalization.
If $\alpha>0$, we can eliminate this derivative jump by increasing
 $\lambda$ from $0$, since the general solution
 $\hat{\phi}$ of \eqref{eigen} becomes less
 oscillatory than $\phi_c$. Therefore,  there
 must be an eigenvalue $\lambda\in[0,1/k^2+\lambda_u]$ for which
 $\hat{\phi}$ satisfies \eqref{eigen} and \eqref{boundary} without the 
 derivative jump. 

Conversely, if $\alpha\le0$, this derivative jump gets large as
 $\lambda$ increases from $0$ and, hence,  there is no positive eigenvalue $\lambda>0$.
\end{proof}

\subsection{Multiple inflection points}

Here, we address the problem of multiple inflection points.
Recall from Proposition~\ref{th:discrete} that neutrally stable
eigenmodes may exist only at the frequencies $\omega=kU_{In}$, $n=1,2,\dots,N$.
In the same manner as for the case of a single inflection point, we consider the equations for the
neutrally stable eigenmodes,
\begin{align}
\Ope{E}_{In}\hat{\phi}_c:=-\hat{\phi}_c''+k^2\hat{\phi}_c-\frac{U''}{U_{In}-U}\hat{\phi}_c=0,\label{marginal}\\
 \hat{\phi}_c(-L)=\hat{\phi}_c(L)=0,
\end{align}
for every inflection point  $x_{In}$, $n=1,2,\dots,N$.
Since these equations do not have nontrivial solutions for general $k$,
we seek them for some characteristic values of $k$, in the same spirit as Tollmien's approach~\cite{Tollmien,Lin,Lin2,ZLin,ZLin2}.


\begin{proposition}\label{wavenumber}
Let $U(x)$ satisfy (A1) and (A2).
For each inflection point $x_{In}$, there is at most one critical
wavenumber $k_n>0$ at which the equation
\begin{align}
 \Ope{E}_{In}|_{k_n}\hat{\phi}_c=0, \quad\hat{\phi}_c(-L)=\hat{\phi}_c(L)=0,
\end{align}
has a nontrivial solution $\hat{\phi}_c$, where $\Ope{E}_{In}|_{k_n}$ denotes
the operator $\Ope{E}_{In}$ at $k=k_n$.
\end{proposition}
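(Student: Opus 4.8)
The plan is to recast the boundary value problem \eqref{marginal} as a self-adjoint Sturm--Liouville eigenvalue problem with spectral parameter $-k^2$, and then to use Lemma~\ref{lem:Sturm} to control the nodal structure of its eigenfunctions. Writing $\Ope{E}_{In}\hat{\phi}_c=0$ as
\begin{align}
 -\hat{\phi}_c''-\frac{U''}{U_{In}-U}\hat{\phi}_c=-k^2\hat{\phi}_c,
\end{align}
the existence of a nontrivial $\hat{\phi}_c$ with $\hat{\phi}_c(\pm L)=0$ is equivalent to $-k^2$ being a Dirichlet eigenvalue of $\Ope{L}_0:=-\partial_x^2-U''/(U_{In}-U)$. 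First I would verify, using (A2), that $\Ope{L}_0$ is a \emph{regular} Sturm--Liouville operator: by strict monotonicity $U_{In}-U$ vanishes only at $x=x_{In}$, where $U''$ also vanishes, and since $U'''(x_{In})\ne0$ the quotient $U''/(U_{In}-U)$ has a removable singularity and extends to a bounded continuous potential on $[-L,L]$. Hence $\Ope{L}_0$ is self-adjoint with a discrete simple spectrum $\mu_0<\mu_1<\dots$, whose eigenfunctions carry exactly $0,1,2,\dots$ interior zeros by the Sturm oscillation theorem.

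Under this correspondence a critical wavenumber is precisely a value $k_n=\sqrt{-\mu_j}>0$ arising from a \emph{negative} eigenvalue $\mu_j<0$; since $\mu\mapsto\sqrt{-\mu}$ is injective on $(-\infty,0)$, distinct eigenvalues yield distinct wavenumbers, so the proposition reduces to showing that $\Ope{L}_0$ admits at most one negative Dirichlet eigenvalue, i.e.\ that only a \emph{nodeless} eigenfunction is admissible. This is where Lemma~\ref{lem:Sturm} enters. The neutral frequency $\omega=kU_{In}$ places the critical layer exactly at $x_c=x_{In}$, so part (ii) applies: any nontrivial solution of \eqref{marginal} has at most one zero on $[-L,x_{In}]$ and at most one on $[x_{In},L]$. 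For a Dirichlet eigenfunction the endpoints $x=-L$ and $x=L$ are already zeros, exhausting the single zero permitted on each subinterval; moreover the solution normalized by $\hat{\phi}_c(-L)=0$, $\hat{\phi}_c'(-L)=1$ stays positive on $(-L,x_{In}]$, so $x_{In}$ is not a node either. Consequently the eigenfunction has no interior zero, it is the ground state, and $\mu_0$ is the only eigenvalue that can be negative.

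Equivalently, and more in the spirit of the preceding corollaries, I would track the solution $\hat{\phi}_c(x;k)$ of the initial value problem $\Ope{E}_{In}|_k\hat{\phi}_c=0$, $\hat{\phi}_c(-L)=0$, $\hat{\phi}_c'(-L)=1$, so that a critical wavenumber is a positive root of $k\mapsto\hat{\phi}_c(L;k)$. By Lemma~\ref{lem:Sturm} this solution is positive on $(-L,x_{In}]$ and has at most one zero $x_z(k)$ on $(x_{In},L]$; as $k$ increases the effective potential $k^2-U''/(U_{In}-U)$ increases everywhere, so by Sturm comparison the single nodal point $x_z(k)$ moves monotonically to the right and reaches $x=L$ at exactly one value $k=k_n$, if at all. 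Thus $\hat{\phi}_c(L;k)$ vanishes for at most one positive $k$.

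The main obstacle is the middle step: pinning the admissible eigenfunction to the ground state. The delicate points are (i) confirming that the potential is genuinely regular at the inflection point, for which the nondegeneracy $U'''(x_{In})\ne0$ in (A2) is essential, and (ii) applying Lemma~\ref{lem:Sturm} with the critical layer sitting precisely at $x_{In}$ and correctly accounting for the two boundary zeros, so that the permitted interior nodal count collapses to zero. In the monotonicity formulation the corresponding difficulty is establishing the strict monotone motion of $x_z(k)$ in $k$, a routine Pr\"ufer-angle / Sturm-comparison computation, which is exactly what guarantees a single sign change of $\hat{\phi}_c(L;k)$.
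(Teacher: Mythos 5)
Your proposal is correct, and your second (``equivalently'') formulation is essentially the paper's own proof: shoot from $x=-L$ with $\hat{\phi}_c(-L)=0$, $\hat{\phi}_c'(-L)=1$, invoke Lemma~\ref{lem:Sturm} to conclude that this solution has at most one zero on $(-L,L]$, and then use Sturm comparison in $k$ --- the solution becomes less oscillatory as $k$ increases and has no zero once $k^2>\max[U''/(U_{In}-U)]$ --- so that $\hat{\phi}_c(L;k)$ can vanish for at most one $k>0$. Your primary argument is a genuinely different packaging of the same nodal information: you treat $-k^2$ as the spectral parameter of the regular self-adjoint Sturm--Liouville operator $-\partial_x^2-U''/(U_{In}-U)$ (the potential being continuous at $x_{In}$ because $U''$ and $U_{In}-U$ both vanish there, with the denominator having a simple zero by monotonicity), reduce the claim to this operator having at most one negative Dirichlet eigenvalue, and use Lemma~\ref{lem:Sturm} to force any Dirichlet eigenfunction at a negative eigenvalue to be nodeless, hence the ground state, hence unique. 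Both routes rest on the same input from Lemma~\ref{lem:Sturm}; the paper's shooting/comparison version is shorter and matches the style of the neighboring corollaries, whereas your spectral version gets the ``at most one'' count directly from simplicity of the ground state and sidesteps the need to argue that the nodal point moves strictly monotonically with $k$. One small correction: removability of the singularity of $U''/(U_{In}-U)$ at $x_{In}$ needs only $U''(x_{In})=0$ together with $U'(x_{In})\neq0$; the nondegeneracy $U'''(x_{In})\neq0$ merely makes the limiting value nonzero and is not what makes the potential regular.
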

\begin{proof}
 According to Lemma~\ref{lem:Sturm}, the solution $\phi_c(x)$ of $\Ope{E}_{In}\phi_c=0$
 satisfying $\phi_c(-L)=0$ and $\phi_c'(-L)=1$ has at most one zero on $-L<x\le L$.
This $\phi_c(x)$ becomes less oscillatory as
 $k$ increases from $0$ to $\infty$ and eventually has no zero for
 $k^2>\max[U''/(U_{In}-U)]$. Therefore, there exists at
 most one value $k_n$ of $k$ for which $\phi_c(L)=0$ holds.
\end{proof}

Without loss of generality, let us focus on an inflection point $x_{I1}$ and
assume that there is a critical wavenumber $k_1>0$ for it. Namely, we
have a
solution $\hat{w}_c=-\hat{\phi}_c''+k_1^2\hat{\phi}_c\in\Spa{L}^2$ that satisfies
\begin{align}
 -\hat{w}_c+\frac{U''}{U_{I1}-U}\Ope{G}|_{k_1}\hat{w}_c=0,
\quad\mbox{or}\quad (\Ope{L}|_{k_1}-U_{I1})\hat{w}_c=0.\label{marginal2}
\end{align}
Now,
 we again invoke Corollary~\ref{th:extension} and consider the {\it
 selfadjoint} eigenvalue problem \eqref{eigen_multiple}.
The above neutrally stable eigenfunction $\hat{w}_c$ clearly corresponds to
the marginally stable eigenfunction ($\lambda=0$) of 
\eqref{eigen_multiple} at $k=k_1$, namely, $\Ope{H}_{\rm v}|_{k_1}\hat{w}_c=0$.

Let us continuously change the parameter $k$ in the neighborhood of
$k_1$ and investigate how an 
eigenvalue $\lambda$ and an eigenfunction $\hat{w}$ deviate
from $\lambda=0$ and $\hat{w}=\hat{w}_c$,
respectively.
By differentiating the identity,
\begin{align}
 0=&\int_{-L}^{L}\hat{w}(\lambda-\Ope{H}_{\rm v})\hat{w}dx,
\end{align}
with respect to $k$ and setting $k=k_1$, we obtain
\begin{align}
 0=&\int_{-L}^{L}\hat{w}_c\left(\left.\frac{\partial\lambda}{\partial
 k}\right|_{k_1}-\left.\frac{\partial\Ope{H}_{\rm v}}{\partial
 k}\right|_{k_1}\right)\hat{w}_cdx\nonumber\\
=&\int_{-L}^{L}\hat{w}_c
\left[\left.\frac{\partial\lambda}{\partial k}\right|_{k_1}-\frac{\nu}{U''}
\left.\frac{\partial\Ope{L}}{\partial k}\right|_{k_1}(U_{I1}-U_{I2})(U_{I1}-U_{I3})\dots(U_{I1}-U_{IN})\right]\hat{w}_c
 dx,
\end{align}
where \eqref{marginal2} has been used.
Since 
\begin{align}
 \frac{\partial\Ope{L}}{\partial k}=U''\frac{\partial\Ope{G}}{\partial k}=-2kU''\Ope{G}\Ope{G},
\end{align}
we get
\begin{align}
\left.\frac{\partial\lambda}{\partial
 k}\right|_{k_1}\|\hat{w}_c\|_{\Spa{L}^2}^2
=&-2k_1\nu(U_{I1}-U_{I2})(U_{I1}-U_{I3})\dots(U_{I1}-U_{IN})
\|\hat{\phi}_c\|_{\Spa{L}^2}^2.\label{lambda}
\end{align}
Similar relations are  available for the other critical wavenumbers $k_2,k_3,\dots,k_N$
if they exist.
In view of the condition \eqref{nu}, one can
 distinguish the sign
of $\partial\lambda/\partial k|_{k_n}$ from \eqref{lambda} as follows;
\begin{align}
 {\rm sgn}\left.\frac{\partial\lambda}{\partial
 k}\right|_{k_n}
={\rm sgn}[U'''(x_{In})U'(x_{In})]
={\rm sgn}[(U'^2)''(x_{In})],
\end{align}
which agrees with Tollmien and Lin's result~\cite{Tollmien,Lin,Lin2}. In other words, if the absolute value of the
background vorticity $|U'(x)|$ has a local maximum (or minimum) at
$x=x_{In}$, then a positive eigenvalue $\lambda>0$ emerges at $k=k_n$ as $k$
decreases (or increases).

We note that there is no positive eigenvalue $\lambda>0$ of
\eqref{eigen_multiple} in the
limit of $k\rightarrow\infty$. As
$k$ continuously changes from $\infty$ to $0$,
the number of positive eigenvalues increases (or decreases) by one when
$k$ passes through $k_n$ that is associated with the inflection point
$x_{In}$ satisfying $(U'^2)''(x_{In})<0$ (or $>0$).
We can summarize these facts into the
following stability criterion.

\begin{corollary}\label{th:index}
 Let $U(x)$ satisfy (A1) and (A2). Suppose that, for every inflection points $x_{In}$,
 $n=1,2,\dots,N$, the critical wavenumbers $k_n>0$, $n=1,2,\dots,N$,
 are either solved or proven to be nonexistent  according to
 Proposition~\ref{wavenumber}. Then, the equation \eqref{eq_w} is spectrally unstable 
if and only if $N^+-N^->0$, where

$N^+$: number of the critical wavenumbers $k_n$ that satisfy $k<k_n$ and
$(U'^2)''(x_{In})<0$,

$N^-$: number of the critical wavenumbers $k_n$ that satisfy $k<k_n$ and $(U'^2)''(x_{In})>0$.
\end{corollary}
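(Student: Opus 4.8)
The plan is to reduce the statement to counting the positive eigenvalues of the selfadjoint operator $\Ope{H}_{\rm v}$ and to track how this count changes as the wavenumber $k$ is varied. By Corollary~\ref{th:extension} together with the eigenvalue formulation \eqref{eigen_multiple}, equation \eqref{eq_w} is spectrally unstable if and only if $\Ope{H}_{\rm v}$ possesses at least one positive discrete eigenvalue $\lambda>0$. It therefore suffices to determine, for the given $k$, the number $n_+(k)$ of such eigenvalues and to establish that $n_+(k)=N^+-N^-$; instability is then equivalent to $n_+(k)>0$.

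First I would fix the two facts that anchor the count. On the one hand, as already noted, $n_+(\infty)=0$: for sufficiently large $k$ the Sturm--Liouville structure behind \eqref{eigen_multiple} renders every eigenfunction non-oscillatory, so no $\lambda>0$ can meet the boundary conditions. On the other hand, the upper edge $\lambda_u=\max[(\nu/U'')\prod_{n=1}^N(U-U_{In})]$ of the continuous spectrum satisfies $\lambda_u<0$ and is bounded away from the origin. Consequently a positive discrete eigenvalue can be neither created nor destroyed by merging into the continuous spectrum, and since the discrete eigenvalues depend continuously on $k$, the integer $n_+(k)$ can change only when some discrete eigenvalue crosses the value $\lambda=0$.

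The heart of the argument is then to locate and to orient these zero-crossings. A crossing at $\lambda=0$ is precisely a nontrivial marginally stable solution $\Ope{H}_{\rm v}\hat{w}_c=0$, which by \eqref{marginal}--\eqref{marginal2} is a neutral eigenmode associated with some inflection point; Proposition~\ref{wavenumber} guarantees that, for each $x_{In}$, this occurs at most once, namely at $k=k_n$. Because $U$ is strictly monotonic by (A2), the values $U_{In}$ are pairwise distinct, so the product in \eqref{lambda} does not vanish and the crossing is transversal with
\begin{align}
 {\rm sgn}\left.\frac{\partial\lambda}{\partial k}\right|_{k_n}={\rm sgn}[(U'^2)''(x_{In})].\nonumber
\end{align}
Hence exactly one eigenvalue crosses zero at each $k_n$, and as $k$ decreases through $k_n$ the eigenvalue moves upward (creating a positive eigenvalue) precisely when $(U'^2)''(x_{In})<0$, and downward (removing one) precisely when $(U'^2)''(x_{In})>0$.

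Collecting these signed increments from $n_+(\infty)=0$ down to the given value of $k$, each $k_n>k$ with $(U'^2)''(x_{In})<0$ raises $n_+$ by one and each $k_n>k$ with $(U'^2)''(x_{In})>0$ lowers it by one, so the net change yields $n_+(k)=N^+-N^-$. Instability is equivalent to $n_+(k)>0$, i.e.\ to $N^+-N^->0$, which is the claim. The step I expect to require the most care is the justification that $n_+(k)$ is locally constant away from the $k_n$ and jumps by exactly $\pm1$ at each $k_n$: this rests on the spectral gap $\lambda_u<0$ (which rules out continuum-edge bifurcations of the positive count), on the simplicity of each crossing forced by the distinctness of the $U_{In}$ and the nonvanishing derivative \eqref{lambda}, and on Proposition~\ref{wavenumber} (at most one crossing per inflection point), all combined with the continuity of the discrete spectrum in $k$. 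Because $n_+$ is a genuine eigenvalue count it is automatically nonnegative, which in passing guarantees $N^+\ge N^-$ and renders the net-crossing bookkeeping independent of the order in which the $k_n$ are encountered.
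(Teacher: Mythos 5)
Your argument is essentially the paper's own: the corollary is proved there by exactly the same eigenvalue-counting scheme, anchoring the count of positive eigenvalues of $\Ope{H}_{\rm v}$ at zero for $k\to\infty$, noting that the continuous spectrum stays below $\lambda_u<0$, and reading the sign of each zero-crossing at $k=k_n$ from the perturbation formula \eqref{lambda}. Your extra remarks on transversality, on the distinctness of the $U_{In}$, and on the automatic nonnegativity of the count only make explicit what the paper leaves implicit.
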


When $N^+-N^-$ is positive, it corresponds to the number of
positive eigenvalues $\lambda$ of \eqref{eigen_multiple}. 
This number cannot be greater than the number of the inflection
points $x_{In}$ satisfying $(U'^2)''(x_{In})<0$, i.e., the number
of local maxima of $|U'(x)|$.

A similar result to Corollary~\ref{th:index} is shown by
Lin~\cite{ZLin,ZLin2} as a rigorous justification of Tollmien's method.
While he treats a larger class of flows than ours,
his criterion is sufficient but not necessary for
instability in the presence of multiple
inflection points~\cite{ZLin2}.    Balmforth \& Morrison~\cite{Balmforth} have also discussed the case of multiple
inflection points in the same manner as Corollary~\ref{th:BM}, where the
derivative jump $\alpha_n$ is evaluated for each inflection point
$x_{In}$ and then $\alpha_n<0$ (or $\alpha_n>0$) corresponds to $k<k_n$ (or $k>k_n$). However, in this work the   importance of ${\rm sgn}(U'^2)''(x_{In})$ was not observed.

\section{Numerical tests}\label{sec:numerical}

Finally, we exhibit  numerical results to illustrate the practicability of our 
method. For three velocity profiles $U(x)$, we 
compare the results of two different numerical codes:  one code solves
the Rayleigh equation~\eqref{eq_phi} directly for  complex
eigenvalues $c=\omega/k\in\Bod{C}$, while the other code solves for 
the eigenvalues $\lambda_1,\lambda_2,\dots, $ of the selfadjoint
operator $\Ope{H}_{\rm v}$ in descending order.

The first example is
\begin{align}
 U(x)=\tanh(x),\quad x\in[-\infty,\infty],
\end{align}
which is well-known to be unstable for $0<k<1$. The result is shown in
Fig.~\ref{fig:tanh}, where we
also plot $\tilde{\lambda}_1=\max Q/\|\xi\|_{\Spa{L}^2}^2$ for
comparison (the damping eigenvalue ${\rm Im}\, c<0$ is not plotted since its presence is trivial).
As  expected from the results of \ref{sec:extension} in Sec.~\ref{sec:proof}, $\lambda=0$ is the upper
edge of the continuous spectrum of $\Ope{H}$. 
Since the eigenfunction $\hat{\xi}_1$ becomes singular, i.e., 
$\|\hat{\xi}_1\|_{\Spa{L}^2}\rightarrow\infty$,  as
$\tilde{\lambda}_1\rightarrow+0$,
the curve of $\tilde{\lambda}_1$ is tangent to the marginal line
$\lambda=0$ and the critical
wavenumber $k=1$ is not so evident. On the other
hand, the upper
edge of the continuous spectrum of $\Ope{H}_{\rm v}$ is less than zero,
$\lambda_u=\max[\tanh(x)/\tanh''(x)]=-0.5<0$, and hence the maximum
eigenvalue $\lambda_1$ of $\Ope{H}_{\rm v}$ smoothly intersect with
$\lambda=0$ at $k=1$ in Fig.~\ref{fig:tanh}. Thus, for the purpose of drawing the stability boundary,
the variational principle with respect to the norm $\|w\|_{\Spa{L}^2}$ is seen to be numerically efficient and accurate.

\begin{figure}
 \centering\includegraphics[width=8cm]{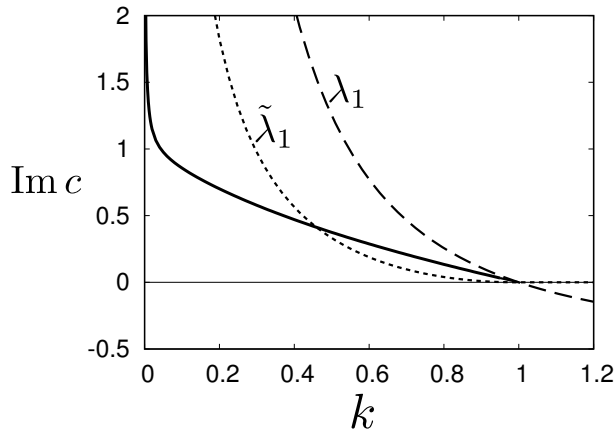}
 \caption{Growth rate ${\rm Im}\, c$ (where ${\rm Re}\, c\equiv0$),
 $\lambda_1=\max Q/\|w\|_{\Spa{L}^2}^2$ and $\tilde{\lambda}_1=\max
 Q/\|\xi\|_{\Spa{L}^2}^2$  versus wavenumber $k$ for the shear flow $U(x)=\tanh(x)$.}
\label{fig:tanh}
\end{figure}

The second example is
\begin{align}
 U(x)=x + 5x^3 + 1.62 \tanh[4(x - 0.5)],\quad x\in[-1,1],
\end{align}
which was previously  addressed by Balmforth \& Morrison~\cite{Balmforth}. This flow
has three inflection points,
\begin{align}
\begin{split}
 x_{I1} &= -0.069, \quad U_{I1} = -1.65,\\
 x_{I2} &= 0.622,\quad \ \   U_{I2} = 2.55,\\
 x_{I3} &= 0.665,\quad  \ \ U_{I3} = 3.07, 
\end{split}
\end{align}
at which $(U'^2)''$ is positive, negative,  and positive, respectively.
Only for $x_{I2}$ and $x_{I3}$, do  the critical wavenumbers
$k_2\simeq1.2$ and $k_3\simeq0.4$ exist. As predicted in
Corollary~\ref{th:index}, the instability occurs only for finite wavenumbers
$k_3<k<k_2$. In Fig.~\ref{fig:finite_k}, the positive signature of the
maximum eigenvalue $\lambda_1$ certainly
agrees with this unstable regime. In practice, our variational approach can directly prove the instability
at a fixed $k$ without knowing the existence of nor the  values  $k_1,k_2$ and $k_3$.

\begin{figure}
\centering\includegraphics[width=8cm]{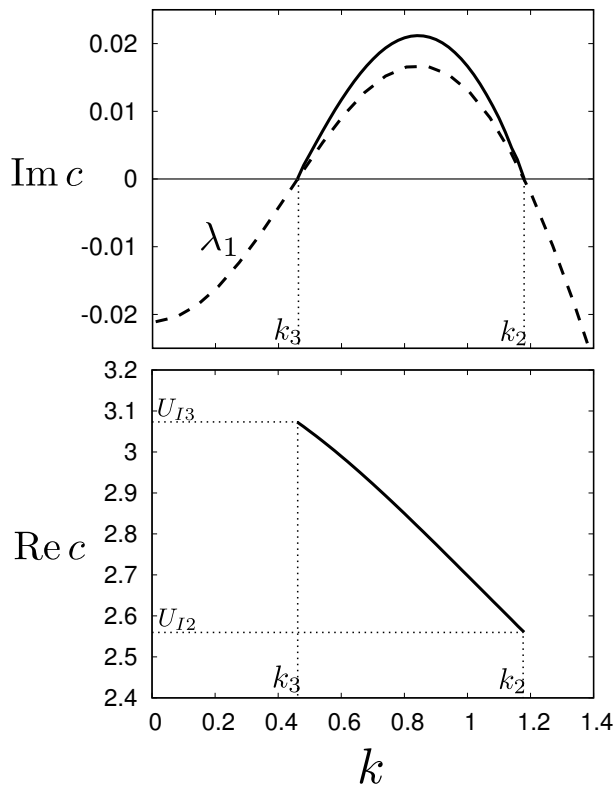}
 \caption{Growth rate (${\rm Im}\, c$)
 and phase speed (${\rm Re}\, c$) versus wavenumber $k$
 for the shear flow $U(x)=x + 5x^3 + 1.62 \tanh[4(x - 0.5)]$.
The dashed line is $\lambda_1=\max Q/\|w\|_{\Spa{L}^2}^2$.
}
\label{fig:finite_k}
\end{figure}

The third example is
\begin{align}
 U(x)=x-0.02+\sin[8(x-0.02)]/16,\quad x\in[-1,1],
\end{align}
which has five inflection points,
\begin{align}
\begin{split}
 x_{I1} &= -0.765, \quad U_{I1} = -0.785,\\
 x_{I2} &= -0.373,\quad U_{I2} = -0.393,\\
 x_{I3} &= 0.020,\quad \  \ U_{I3} = 0.0,\\
 x_{I4} &= 0.413,\quad \ \ U_{I4} = 0.393,\\
 x_{I5} &= 0.805,\quad \ \ U_{I5} = 0.785. 
\end{split}
\end{align}
For this example there exist three critical wavenumbers
$k_1$, $k_3$, and $k_5$
for the inflection points $x_{I1},x_{I3},x_{I5}$,  all of  which have 
$(U'^2)''$ negative. Therefore, three unstable eigenvalues emerge
at $k_1,k_3,k_5$ with different phase speeds $U_{I1},U_{I3},U_{I5}$,
respectively. Thus, three eigenvalues $\lambda_1,\lambda_2,\lambda_3$ of our
variational problem
completely predict the onsets of instabilities, as shown in Fig.~\ref{fig:multi}.

\begin{figure}
\centering\includegraphics[width=8cm]{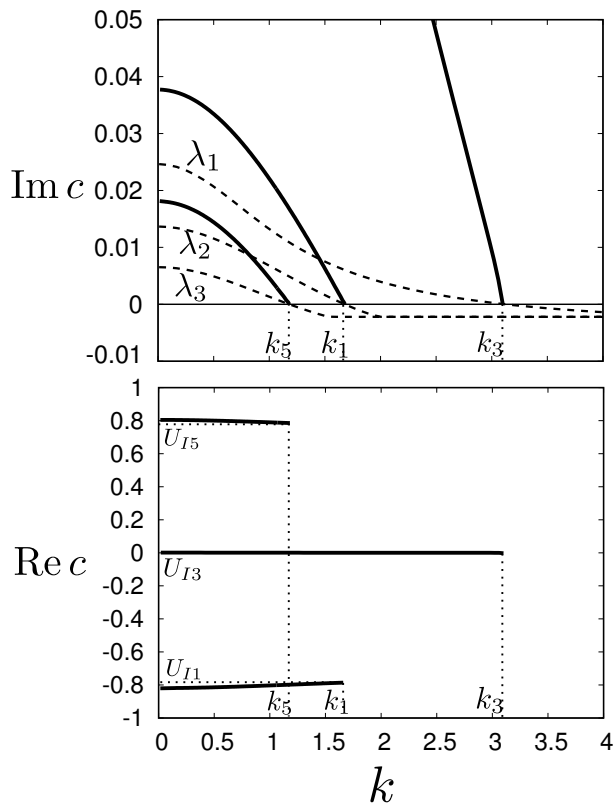}
 \caption{Growth rate (${\rm Im}\, c$)
 and phase speed (${\rm Re}\, c$) versus wavenumber $k$ 
 for the shear flow $U(x)=x-0.02+\sin[8(x-0.02)]/16$.
The dashed lines are eigenvalues $\lambda_1,\lambda_2,\lambda_3$ of
 $\Ope{H}_{\rm v}$.
}
\label{fig:multi}
\end{figure}

\section{Summary}
\label{summary}

We have investigated the linear stability of inviscid plane parallel shear flow
(Rayleigh's equation) as a typical example of an infinite-dimensional and
non-selfadjoint eigenvalue problem that originates upon linearizing a 
Hamiltonian system. By assuming monotonicity and analyticity of the shear
profile, a necessary and sufficient condition
for spectral stability was obtained  in the form of a variational criterion
(Theorem~\ref{th:multiple}), in which a positive signature of the
quadratic form $Q=\langle\xi,\Ope{H}\xi\rangle$ 
 implies  existence of an unstable eigenmode. Since $\Ope{H}$ is selfadjoint, we were able to prove
instability must occur if  some test function $\xi$ (virtual displacement) exists that makes $Q$ positive,
which is analytically and numerically easier to do than solving Rayleigh's equation.
Moreover, the singularity at the stability boundary (due to the continuous spectrum)
was shown to be  removed technically by maximizing $Q$ with respect to the vorticity disturbance
$w\in\Spa{L}^2$,  instead of the displacement $\xi\in\Spa{L}^2$. However,
we remark that, unlike the Rayleigh-Ritz method, neither 
$\max Q/\|\xi\|_{\Spa{L}^2}^2$ nor $\max Q/\|w\|_{\Spa{L}^2}^2$ are  quantitatively related to
the maximum growth rate of instability.

Our variational criterion is  an improvement of previous sufficient stability criteria  \cite{Arnold,Barston}.
Given that Rayleigh's equation has been solved under a specific condition,
we have also reproduced the earlier results of the Nyquist method~\cite{Rosenbluth,Balmforth}
and Tollmien's analysis of the neutral modes~\cite{Tollmien,Lin,Lin2}.


In this paper, we have imposed the assumptions (A1) and (A2) on 
velocity profile $U(x)$ to simplify the discussion. The relaxation of
these assumptions is possible to some extent, but it would be difficult to
overcome the following difficulties:  (i)  If  analyticity is not assumed and  $U''$ is only continuous,  
special care is needed for piecewise-linear regions of
$U$. In such a  region, say $[x_1,x_2]$, all points are regarded as 
inflection points  and we expect that the variational problem would become 
a minimax problem like $\min_{x_I\in[x_1,x_2]}\max_{\xi\in\Spa{L}^2}Q>0$, which is not so
analytically tractable. (ii)  If  monotonicity is not assumed, a serious difficulty arises when
the sign of $U''$ is not identical at the locations of multiple critical
layers for a phase speed $c=\omega/k\in\Bod{R}$. Since at  this frequency
$\omega=kc$ belongs to degenerate multiple continuous spectra whose signature is indefinite, our technique for
constructing $Q$  breaks down.

In conclusion we note that our variational approach will be applicable to rather
simple equilibrium profiles which are free from the above difficulties.  However, there is a large class of  fluid and plasma systems with existing sufficient stability criteria (e.g., magnetohydrodynamics \cite{Frieman,ampIIa} with flow)  that have  
Kre\u{i}n-like  signature (or  action-angle variables) for a  continuous spectrum.  This is the key ingredient  needed for  constructing the  quadratic form. Thus,  our techniques are available for  a large class of applications governed by  other dynamical systems.  We will report our additional results  in future publications.

\section*{Acknowledgment}

The authors would like to thank Z. Yoshida,  Y. Fukumoto, and G. Hagstrom 
for fruitful discussions.   This work was supported by a grant-in-aid for scientific research from the Japan  Society for the Promotion of Science (No. 25800308).
P.J.M.\   was  supported by U.S. Dept.\ of Energy Contract \# DE-FG05-80ET-53088.

\end{document}